\pgfplotsset{compat = 1.15, cycle list/Set1-8} 
\definecolor{lightgray}{gray}{0.9}
\DeclareMathOperator*{\val}{val}
\newcommand{\R}{\mathbb{R}}
\newcommand{\B}{\mathbb{B}}
\newcommand{\Z}{\mathbb{Z}}
\newcommand{\NP}{\textsf{NP}}
\newcommand{\Pclass}{\textsf{P}}
 \let\mathscr\relax \usepackage[scr]{rsfso}
\newcommand{\stab}{\textsc{stab}}
\newcommand{\poly}{\mathcal{P}}
\newcommand{\polyR}{\mathcal{R}}
\DeclareMathOperator{\conv}{conv}
\newcommand{\pmc}{\textsc{pmc}}
\newcommand{\vcp}{\textsc{vcp}}
\newcommand{\antiweb}{\bar{W}^{q}_{\ell}}
\newcommand{\web}{W^{q}_{\ell}}
\theoremstyle{plain}
\newtheorem{theorem}{Theorem}
\newtheorem{proposition}[theorem]{Proposition}
\newtheorem{observation}[theorem]{Observation}
\newtheorem{corollary}[theorem]{Corollary}
\newtheorem{lemma}[theorem]{Lemma}
\theoremstyle{definition}
\begin{document}
\title{Compact formulations and valid inequalities for parallel machine scheduling with conflicts}

\author{Phablo F. S. Moura, Roel Leus, and Hande Yaman}
\affil{Research Center for Operations Research \& Statistics, KU Leuven, Belgium\\
\texttt{\normalsize\{phablo.moura, roel.leus,  hande.yaman\}@kuleuven.be}
}

\maketitle

\begin{abstract}
\noindent The problem of scheduling conflicting jobs on parallel machines consists in assigning a set of jobs 
to a set of machines so that no two conflicting jobs are allocated to the same machine, and the maximum processing time among all machines is minimized.
We propose a new compact mixed integer linear 
formulation 
based on the representatives model for the vertex coloring problem, which overcomes a number of issues inherent in 
the natural assignment model.  
We present a polyhedral study of the associated polytope, and describe classes of valid inequalities inherited from the stable set polytope.
We describe branch-and-cut algorithms for the problem, and  report on computational experiments with benchmark instances. 
Our computational results on the hardest instances of the benchmark set show that the proposed algorithms are superior (either in running time or quality of the solutions) to the current state-of-the-art methods. 
We find that our new method performs better than the existing ones especially when 
the gap between the optimal value and the trivial lower bound (i.e., the sum of all processing times divided by the number of machines) increases.

\medskip
\noindent \textbf{Keywords.} Integer programming, Scheduling, Valid inequalities, Branch and cut

\end{abstract}

\section{Introduction}
Given an undirected graph $G=(V,E)$ (a.k.a.\ \emph{conflict graph}) where $V$ is a set of $n$ vertices (representing the jobs),  processing times $p \colon V \to \Z_>$, and $m\geq 2$ identical machines, the problem \textsc{Parallel Machine Scheduling with Conflicts} (\pmc) consists in finding an assignment $c \colon V \to \{1,\ldots, m\}$ with $c(u)\neq c(v)$ for all $\{u,v\} \in E$ that minimizes $\max_{k \in \{1,\ldots,m\}} \sum_{v \in V \colon c(v)=k} p(v)$, that is, the makespan.
This problem is clearly $\NP$-hard as it contains \mbox{$3$-\textsc{partition}}.

This problem was first studied by~\cite{BODLAENDER1994219}, who designed polynomial-time approximation schemes (PTASs) for the cases where the conflict graph is bipartite, complete multipartite, or has bounded treewidth.
Very recently, \cite{FURMANCZYK2024106606} showed a 2-approximation algorithm for \pmc, and a PTAS for the case when the jobs have processing times of unit duration.
A number of solving methods based on mixed-integer linear programming (MILP)  have also been proposed for \pmc\ in the literature.
One is the branch-and-price algorithm due to~\cite{BIANCHESSI2021105464}, which uses a MILP formulation with a binary variable for each stable set of the conflict graph and each machine, and a non-negative real variable for every machine.
Another algorithm (the first-published computational study, to our knowledge) is a binary search using a set-covering model for Bin Packing with Conflicts (\textsc{bppc}) devised by~\cite{kowalczyk2017exact}.
Given $m \in \Z_>$ identical bins with capacity $C \in \Z_>$, a set~$V$ of~$n$ items with capacity consumption $p \colon V \to \Z_>$, and a (conflict) graph $G=(V,E)$, \textsc{bppc} consists in finding an assignment of items to a minimum number of bins so that the capacity of the bins is respected and no pair of items in $E$ are assigned to the same bin.
Note that the decision versions of \pmc\ and \textsc{bppc} are equivalent: an instance $(G,p,m)$ of \pmc\ has a solution of makespan~$C$ if and only if the items in $V(G)$ can be assigned (without conflicts) to $m$ bins of capacity $C$.
\cite{Sadykov13} proposed a branch-and-price approach to~\textsc{bppc}, and~\cite{Epstein08} designed constant-factor approximation algorithms for the problem restricted to bipartite graphs and perfect graphs.
\cite{MALLEK2019357} proposed heuristics and MILP formulations for a variant of \pmc\ where the machines run at different speeds and all jobs are unit time.
This problem was shown to be $\NP$-hard even when restricted to instances with two machines and a forest as the conflict graph~\citep{mallek2024scheduling}.

Let~$[\ell]$ denote the set $\{1,\dots,\ell\}$ for any $\ell \in \Z_>$.
Scheduling with conflicts is closely related to the vertex coloring problem (\vcp), which consists in, given a graph $G=(V,E)$,  finding a coloring $c \colon V \to [\ell]$ with $\ell \in \Z_>$ colors such that $c(u)\neq c(v)$ for all $\{u,v\} \in E$, and $\ell$ is minimum.
The smallest $\ell \in \Z_>$ such that $G$ admits an $\ell$-coloring is denoted by  $\chi(G)$.
Obviously, any feasible solution for an instance of  \pmc\ with conflict graph $G$ and $m$ machines is a (proper) vertex coloring of $G$ with $m$ colors.
Thus an instance $(G,p,m)$ of this problem is infeasible if and only if $m < \chi(G)$.
\vcp\ is a classical optimization problem with a vast literature devoted to algorithms, complexity, structural aspects, and computational experiments 
\cite[see, e.g.,][]{malaguti2010survey,tuza1997graph}.

It is therefore natural to use the knowledge about~\vcp\ to study~\pmc.
In this work, we focus on mixed integer linear programming (MILP) formulations, strong valid inequalities, and separation algorithms.

First, in Section~\ref{sec:assign}, we consider the natural assignment formulation for \pmc\ using binary variables indexed by the jobs and machines, discuss how to reduce symmetries in this model, and show that the optimal value of its linear relaxation is precisely the average makespan, that is, $\frac{1}{m} \sum_{v \in V} p(v)$, a trivial lower bound for \pmc.

In Section~\ref{sec:rep-form}, we propose a compact MILP formulation for \pmc\ to alleviate some issues related to symmetry and unbalancedness of the enumeration tree associated with the assignment model.  
The proposed formulation for \pmc\ uses a set of representative jobs (one for each machine)  to represent feasible solutions to the problem, and is based on the asymmetric representatives model for \vcp\ introduced by~\cite{CamCamCor08}. 
A similar idea of representatives is used to model the classical one-dimensional bin packing problem in~\cite{HadjSalem20}.
In Section~\ref{sec:rep-poly} we conduct a polyhedral study of the associated polytope and establish some classes of valid inequalities inherited from the stable set polytope that are induced by specific classes of subgraphs.
We also briefly discuss the separation problems of some of these classes of inequalities.

Section~\ref{sec:experiments} presents a description of the implementation details of branch-and-cut algorithms for \pmc\, and a report on computational experiments using benchmark instances from the literature.
Our proposed solution method based on the (compact) representatives formulation is easier to implement than the best algorithms currently known for \pmc, which are all based on models with exponentially many variables and $\NP$-hard pricing problems.
Furthermore, the proposed method for \pmc\ is more amenable to enhancements, such as the inclusion of new cuts derived for the stable set and vertex coloring problems.

The computational experiments on the hardest instances in the benchmark instance set for \textsc{pmc} show that the proposed algorithms are in general superior (either in running time or quality of the solutions) to the current state-of-the-art methods.
Considering the entire set of instances in the benchmark, the branch-and-cut approach produces essentially the same average optimality gaps as obtained by the other algorithms in the literature.
Looking only at the instances with positive gap, our average gaps are up to 10 times smaller for the instances with a graphs density of at least $0.3$.
We also observe from our computational results that most of the (feasible) solved instances in the benchmark dataset with at least 50 vertices  have their optimal value very close to (at most 5\% larger than) the trivial lower bound~$\lceil\sum_{v \in V} p(v)/m\rceil$ for \pmc. This implies that the conflicts in these instances do not really affect the optimal value very much compared to the optimal makespan when scheduling the same jobs without conflicts.
To better understand the impact of the conflicts, we construct extra instances where the difference between the optimal value and the trivial lower bound can be large, and present empirical evidence that our method is more likely to outperform the best algorithm in the literature as this difference increases.

\section{Assignment formulation} \label{sec:assign}

A very natural way to model the set of feasible solutions for an instance $(G=(V,E),p,m)$ of \pmc\  is the one based on the assignment model for \vcp.
In such formulation, for each $v \in V$ and $k \in [m]$, there is a binary variable $\tilde x _{vk}$ indicating the machine that processes the job: $\tilde x_{vk} =1$ if, and only if, $v$ is assigned to machine $k$.
Moreover, the model contains a non-negative real variable~$\tilde y$ whose value is at least the maximum processing time over all machines~$k \in [m]$.
The following assignment formulation, denoted by (AF), for \pmc\ was introduced by~\cite{kowalczyk2017exact}.
\begin{align} 
     &\text{(AF)}& \min & ~ \tilde y  \nonumber \\ 
&&  \text{s.t.\;} & \sum_{k \in [m]} \tilde x_{vk} \geq 1, && \!\!\! \forall v \in V, \label{eq:assign}\\ 
    &&  & \tilde x_{uk} + \tilde x_{vk} \le 1 && \!\!\! \forall \{u,v\} \in E, k\in [m]. \label{eq:ass:edge}\\
    && & \sum_{v \in V} p(v) \tilde x_{vk} \leq \tilde y && \!\!\!\forall k \in [m],\label{eq:time}\\  
    &&  & \tilde x_{vk} \in \{0,1\} && \!\!\!\forall v \in V,  k\in [m], \\
    &&  & \tilde y\geq 0. && 
\end{align}
Constraints~\eqref{eq:assign} guarantee that each job is assigned to at least one machine, while constraints~\eqref{eq:ass:edge} imply that no conflicting jobs are assigned to the same machine.
The objective function together with constraints~\eqref{eq:time} establish the minimum makespan objective. 
This assignment formulation for \pmc\ has precisely $nm$+1 variables and $n+(|E|+1)m$ constraints, and so it is compact. 

Albeit simple and easily implemented, this model has some potential drawbacks.
First, the optimal value of its linear relaxation equals the average processing time per machine, a trivial lower bound for \pmc, which can be seen as follows.  Let $\underline{\tilde y}$ denote the optimal value of the linear relaxation of \rm{(AF)} on input $(G=(V,E),p,m)$ of \pmc.
First observe that the vector $\tilde x \in \R^{nm}$ with $\tilde x_{vk}=1/m$ for all $v \in V$ and $k \in [m]$ is a  feasible  solution (for any graph) for the linear relaxation of (AF).
    Thus, the optimal value of such relaxation is at most $\frac{1}{m} \sum_{v \in V}p(v)$.
    Consider now any feasible solution $(\underline{ \tilde y}, \tilde x)$ of the linear relaxation of (AF), and note that $\tilde x$  satisfies~\eqref{eq:assign}  for each~$v \in V$.
    By summing~\eqref{eq:time} over all $k \in [m]$, we have
    \[\underline{\tilde y} \geq \frac{1}{m} \sum_{v \in V}   p(v) \sum_{k\in [m]} \tilde x_{vk} \geq \frac{1}{m} \sum_{v \in V} p(v), \]
    where the last inequality holds because $\tilde x$ satisfies~\eqref{eq:assign}. 
    Therefore, the following observation holds.

\begin{observation}\label{obs:lp-value}
    It holds that \(\underline{\tilde y} = \frac{1}{m} \sum\limits_{v \in V} p(v).\)
\end{observation}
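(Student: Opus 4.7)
The plan is to establish the equality by proving the two matching inequalities separately, and in fact both halves essentially amount to writing down a one-line construction.

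For the upper bound $\underline{\tilde y} \leq \frac{1}{m}\sum_{v \in V} p(v)$, I would exhibit an explicit feasible point for the LP relaxation: set $\tilde x_{vk} = 1/m$ for every $v \in V$ and $k \in [m]$, and take $\tilde y = \frac{1}{m}\sum_{v \in V} p(v)$. Then $\sum_{k \in [m]} \tilde x_{vk} = 1$ for every $v$, so \eqref{eq:assign} holds; for each edge $\{u,v\} \in E$ and machine $k$, we have $\tilde x_{uk} + \tilde x_{vk} = 2/m \leq 1$ since $m \geq 2$ by assumption, so \eqref{eq:ass:edge} holds; and \eqref{eq:time} holds with equality on every machine by construction. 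Non-negativity is immediate, so this solution is feasible and bounds the LP optimum from above by its objective value.

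For the matching lower bound $\underline{\tilde y} \geq \frac{1}{m}\sum_{v \in V} p(v)$, I would take an arbitrary feasible solution $(\underline{\tilde y}, \tilde x)$ of the LP relaxation and aggregate the machine-load constraints. Summing \eqref{eq:time} over all $k \in [m]$ yields $m \underline{\tilde y} \geq \sum_{v \in V} p(v) \sum_{k \in [m]} \tilde x_{vk}$, and then invoking the assignment constraints \eqref{eq:assign} in the form $\sum_{k \in [m]} \tilde x_{vk} \geq 1$ for each $v$ gives $m \underline{\tilde y} \geq \sum_{v \in V} p(v)$. Dividing by $m$ concludes the argument.

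There is no real obstacle in this proof; the only thing to be careful about is that the symmetric fractional assignment $\tilde x_{vk} = 1/m$ genuinely satisfies the edge constraints, which requires $m \geq 2$ — an assumption already built into the definition of \pmc\ in the introduction. Note also that the argument makes no use of the edge structure of $G$ at all, which is precisely the weakness of (AF) that motivates the representatives formulation in the next section.
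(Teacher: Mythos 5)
Your proof is correct and follows essentially the same route as the paper: the symmetric fractional point $\tilde x_{vk}=1/m$ gives the upper bound, and aggregating the load constraints~\eqref{eq:time} over machines together with~\eqref{eq:assign} gives the matching lower bound. Your explicit check of the edge constraints via $m\ge 2$ is a small detail the paper leaves implicit, but the argument is the same.
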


Another weakness of (AF) is that the branching strategy where a variable of fractional value is set to 1 in one subproblem and to 0 in the other subproblem tends to produce a very unbalanced enumeration tree. 
Because of \eqref{eq:assign}, setting $\tilde x_{vk}=1$ forces $\tilde x_{v\ell}=0$ for all $ \ell \in [m]\setminus \{k\}$, while $\tilde x_{vk}=0$ only prevents job~$v$ from being assigned to machine~$k$.
This was noticed by~\cite{MENDEZDIAZ2006826}  in the context of a branch-and-cut algorithm for \vcp.
Moreover, (AF) is symmetric, in the sense that 
any permutation of the machines yields a different solution with the same value. 
One may remove some of these symmetries by further imposing an ordering of the machines according to their processing times:
\begin{align}
     \sum_{v\in V}p\!\left( v\right) \tilde x_{vk}\geq \sum_{v\in V}p\!\left( v\right) \tilde x_{v,k+1} && \forall k \in\left[ m-1\right]. \label{ineq:sbc:time}   
\end{align}
Observe that with \eqref{ineq:sbc:time}  it is still possible to obtain equivalent solutions since machines that have the same total processing time are interchangeable.

To avoid such symmetries, one may use the constraints introduced by~\cite{MENDEZDIAZ2008159} to sort the stable sets in a vertex coloring by the minimum label of the vertices in each set, and consider only colorings that assign color $k$ to the  $k$-th stable set.
In the context of scheduling, assuming $V=[n]$, this means that a job $v$ can only be assigned to a machine $k\leq v$. 
These constraints are modeled as follows.
\begin{align}
     \tilde x_{vk}   &= 0& \forall v \in [m-1], k \in \{v+1, \dots, m\},\\
     \tilde x_{vk}   &\le \sum_{u=k-1}^{v-1} \tilde x_{u,k-1}& \forall v \in V\setminus \{1\}, k \in \{2,\ldots, \min(v,m)\}.\label{ineq:sbc:label}
\end{align}
Because of~\eqref{eq:assign}, the following inequalities are valid and strengthen~\eqref{ineq:sbc:label}:
\begin{align}
    \sum_{i=k}^{\min\{v,m\}} \tilde x_{vi}   &\le \sum_{u=k-1}^{v-1} \tilde x_{u,k-1}& \forall v \in V\setminus \{1\}, k \in \{2,\ldots, \min(v,m)\}.\label{ineq:sbc:label2}
\end{align}

\section{Representatives formulation}\label{sec:rep-form}

A vertex coloring of graph $G=(V,E)$ can be seen as a partition $\mathcal{S}$ of $V$ into $\ell \ge \chi(G)$ stable sets of $G$.
Each stable set $W \in \mathcal{S}$  is called a \emph{color class} and corresponds to the vertices that are assigned a same color in the coloring.
Given a nonempty color class $W\in \mathcal{S}$,
one can choose any vertex $v \in W$ to be its \emph{representative}.
For every vertex $u \in W\setminus \{v\}$,  $u$ is said to be \emph{represented} by $v$ (or that $v$ \emph{represents}~$u$).

The idea of representatives for the color classes in a vertex coloring was introduced by~\cite{CAMPELO2004159} to model the classical Vertex Coloring Problem.
Their integer linear programming model has binary variables $x_{vv}$ for each $v \in V$, and $x_{uv}$ for each $\{u,v\} \notin E$ with the following interpretation: $x_{vv}=1$ iff $v$ is the representative of its color class, and~$x_{uv}=1$ iff $u$ is a representative and $u$ represents $v$.

Consider an instance $(G,p,m)$ of \textsc{pmc} with $G=(V,E)$. 
The complement of~$G$, denoted by $\bar G$, is the graph $(V,\bar E)$ where 
$\bar E = \{\{u,v\} \subset V :\{u,v\} \notin E,  u\neq v\}$
is the complement of $E$.
We denote by $\bar e$ the size of $\bar E$.
Let $\prec$ be an arbitrary ordering on $V$.
We remark  that this ordering is not related to the order in which the jobs are processed (the latter ordering is irrelevant for the makespan objective).
As we shall discuss later on, the ordering $\prec$ avoids symmetrical solutions and  affects the dimension of polytope associated with the representatives formulation of the problem.
Before showing the model, we present some additional useful notation.

Let $\bar N^-(v)=\{u \in V : u\prec v, \{u,v\}\notin E\}$ and 
$\bar N^+(v)=\{u \in V : v \prec u, \{u,v\}\notin  E\}$ be the negative and positive anti-neighborhood of~$v$, respectively.  
Define $\bar N[v] = \bar N^+(v) \cup \bar N^-(v) \cup \{v\}$, $\bar N^{-}[v] =\bar N^-(v)\cup \{v\}$, and $\bar N^{+}[v] =\bar N^+(v)\cup \{v\}$. 
For each $v\in V$ and $u\in \bar N^{-}[v]$, 
define a variable $x_{uv}\in \{0,1\}$ that equals~$1$ if, and only if, the machine represented by job~$u$  contains job~$v$.
Additionally, there is a real variable $y$ which is an upper bound for the processing time of each machine.
We next introduce a representatives formulation for \textsc{pmc} on input $(G,p,m)$ with $G=(V,E)$.

\begin{align} 
     && \min & ~y  \label{eq:obj} \\
    && \text{s.t.\;} & \sum \limits_{v \in V} x_{vv} \leq m && \!\!\! \label{eq:machines}\\ 
    &&  & \sum \limits_{u \in \bar{N}^{-}[v]} x_{uv} \ge 1 && \!\!\!\forall v \in V, \label{eq:cover}\\ 
    &&  & x_{vu} + x_{vw} \leq x_{vv} && \!\!\!\forall v\in V, u,w \in \bar{N}^{+}(v) \text{ with } \{u,w\} \in E, \label{eq:edge}\\
    &&  & x_{vu} \leq x_{vv} &&  \!\!\!\forall v\in V,  u \in \bar{N}^{+}(v), \label{eq:link}\\
    && & \sum \limits_{u \in \bar{N}^{+}[v]} p(u) x_{vu} \leq y && \!\!\!\forall v \in V, \label{eq:processing-time}\\ 
    &&  & x_{uv} \in \{0,1\} && \!\!\!\forall v \in V,  u \in \bar{N}^{-}[v], \\
    &&  & y\geq 0. && \label{eq:non-neg}
\end{align}
The objective function~\eqref{eq:obj} together with constraints~\eqref{eq:processing-time} minimizes the maximum processing time of the machines.
A solution that uses at most~$m$ machines is guaranteed by constraint~\eqref{eq:machines}.
Constraints~\eqref{eq:cover} ensure that each job~$v$ is assigned to one machine, 
considering machines represented by~$v$ or by any of its negative anti-neighbors in $G$. 
Constraints~\eqref{eq:edge} guarantee that no conflicting jobs are assigned to the same machine.
Moreover, if a job is not a representative, it cannot represent any job according to constraints~\eqref{eq:link}.

Let us define the set of \emph{sources} as $S=\{v \in V :  \bar N^-(v)=\emptyset\}$ and \emph{sinks} as $T=\{v \in V :  \bar N^+(v)=\emptyset\}$.
Clearly, every source $v \in S$ can only be represented by itself, and so $x_{vv}=1$ holds for all feasible solutions. 
Thus, for each~$v \in S$, we simply substitute the variable~$x_{vv}$ with 1 and remove it from the model.
Then, inequality~\eqref{eq:machines} is replaced by
\begin{align}
    & \sum \limits_{v \in V\setminus S} x_{vv} \leq m -|S|. && \label{eq:machines2}
\end{align}
Moreover, inequalities~\eqref{eq:cover} are substituted by 
\begin{align}
     & \sum \limits_{u \in \bar{N}^{-}[v]} x_{uv} \geq 1 && \!\!\!\forall v \in V\setminus S. \label{eq:cover2}
\end{align}

Note that, for each  $v \in V\setminus T$ and $u,w \in \bar N^+(v)$ with $\{u,w\} \in E$, inequalities $x_{vu} \leq x_{vv}$ and $x_{vw} \leq x_{vv}$ are dominated by $x_{vu}+x_{vw} \leq x_{vv}$ if $v \notin S$, and $x_{vu}\leq 1$   and $x_{vw} \leq 1$ are dominated by $x_{vu}+x_{vw} \leq 1$ otherwise.
Therefore, we substitute inequalities~\eqref{eq:edge} and \eqref{eq:link} by the following ones.
\begin{align}
     & \sum_{u \in K} x_{vu} \leq x_{vv} && \!\!\!\forall v \in (V\setminus T) \setminus S, K \in \mathcal{K}(v), \label{eq:edge2:non-sources} \\
     & \sum_{u \in K} x_{vu} \leq 1 && \!\!\!\forall v \in (V\setminus T)\cap S, K \in \mathcal{K}(v), \label{eq:edge2:sources}
 \end{align}
where $\mathcal{K}(v)$ is the collection of all sets of vertices in $\bar N^+(v)$ that induce cliques of size $2$, or  maximal cliques of size $1$ in $G[\bar N^+(v)]$.

Finally, we replace inequalities~\eqref{eq:processing-time} by 
\begin{align}
    &p(v) x_{vv} + \sum \limits_{u \in \bar{N}^{+}(v)} p(u) x_{vu} \leq y && \!\!\!\forall v \in V\setminus S,\label{eq:processing-time2:non-sources}\\
    &p(v) \phantom{x_{vv}} + \sum \limits_{u \in \bar{N}^{+}(v)} p(u) x_{vu} \leq y && \!\!\!\forall v \in S.\label{eq:processing-time2:sources}
\end{align}

The representatives formulation \rm{(RF)} for \pmc\ is
\[\min\left\{y : (x,y) \in \{0,1\}^{n+\bar e- |S|} \times \R \text{ satisfies } \eqref{eq:machines2} - \eqref{eq:processing-time2:sources}\right\}.\]

Although the formulation correctly models \pmc\ for any ordering $\prec$ on the vertices, one can get additional properties when assuming certain orderings.
As an example, consider an ordering $\prec$ on $V$ such that there is $u \in S$ with  $p(u)\geq p(v)$ for all $v \in V$.
Since $u$ is a source, inequality~\eqref{eq:processing-time2:sources} guarantees that the optimal value of the linear relaxation of \rm{(RF)} is at least the maximum processing time, a trivial lower bound for the optimal value  of \pmc.

With respect to the lower bounds given by the linear relaxations of both (AF) and (RF), we next show that, in general,  no formulation is superior to the other.

\begin{proposition}\label{prop:lp-comparison}
Let  $\underline{\tilde  y}$ and $\underline{y}$ denote the optimal values
    of the linear relaxations of {\rm{(AF)}} and {\rm{(RF)}}, respectively, on instance $(G,p,m)$ of \pmc.
    It holds that $\underline{\tilde  y} - \underline{y}$ is not bounded.
\end{proposition}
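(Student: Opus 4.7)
My plan is to prove the proposition in both directions: I would exhibit one family of instances on which $\underline y - \underline{\tilde y} \to \infty$ (so \rm{(RF)} can be strictly stronger than \rm{(AF)}) and another on which $\underline{\tilde y} - \underline y \to \infty$ (so \rm{(AF)} can be strictly stronger than \rm{(RF)}). Observation~\ref{obs:lp-value} gives me $\underline{\tilde y} = \frac{1}{m}\sum_{v \in V} p(v)$ for free, so the entire work is about bounding $\underline y$.

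The first direction is easy. I would take $G = K_n$ with $p(v_1) = M$, $p(v_i) = 1$ for $i \geq 2$, and $m = n$ machines. Since $\bar G$ has no edges, every vertex is simultaneously a source and a sink; hence every $x_{vv}$ is substituted by $1$ and \rm{(RF)} has no remaining $x$-variables. Only constraint~\eqref{eq:processing-time2:sources} at $v_1$ is binding on $y$, forcing $\underline y \geq M$. Against $\underline{\tilde y} = (M + n - 1)/n$ this makes $\underline y - \underline{\tilde y} = (n-1)(M-1)/n$, unbounded as $M \to \infty$.

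For the reverse direction, which contains the real work, I would take $G$ to be the edgeless graph on $n$ vertices with the natural ordering $v_1 \prec \cdots \prec v_n$, unit processing times, and $m = 2$, so that $\underline{\tilde y} = n/2$. The task is to construct a feasible solution of the linear relaxation of \rm{(RF)} of $y$-value only $O(\sqrt{n})$. Setting $k = \lceil\sqrt{2n}\rceil$, my candidate sets $x_{v_i v_i} = 1/k$ for $i \in \{2, \ldots, k+1\}$ (saturating the machine budget~\eqref{eq:machines2}, since $|S| = 1$), sets $x_{v_i v_j} = 1/k$ for every such $i$ and every $j > i$, and closes each cover inequality~\eqref{eq:cover2} exactly by $x_{v_1 v_j} = (k-j+1)/k$ for $j \leq k+1$ and $x_{v_1 v_j} = 0$ for $j > k+1$. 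A routine summation then yields load $1 + (k-1)/2 = (k+1)/2$ at $v_1$ via~\eqref{eq:processing-time2:sources} and load $(n-i+1)/k \leq (n-1)/k$ at each representative $v_i$ via~\eqref{eq:processing-time2:non-sources}; balancing these two quantities at $k = \lceil\sqrt{2n}\rceil$ gives $\underline y = O(\sqrt{n})$, hence $\underline{\tilde y} - \underline y \geq n/2 - O(\sqrt{n}) \to \infty$ as $n \to \infty$.

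The only obstacle I expect is the bookkeeping for $x_{v_1 v_j}$ in the second family: I need to verify that these values lie in $[0,1]$ and that each cover inequality is met with equality. Once this is done, the link, edge and non-negativity constraints follow immediately because $G[\bar{N}^+(v)]$ is edgeless for every $v$, so $\mathcal{K}(v)$ contains only singletons and the associated inequalities reduce to $x_{v u} \leq x_{vv}$, which hold by construction since all non-zero entries are equal to $1/k$.
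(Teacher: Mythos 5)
Your proof is correct and follows the same overall strategy as the paper's: use Observation~\ref{obs:lp-value} to pin down $\underline{\tilde y}$, then exhibit two families of instances, one where $\underline{y}-\underline{\tilde y}\to\infty$ and one where $\underline{\tilde y}-\underline{y}\to\infty$. The explicit certificates differ, though. For the first direction the paper keeps the conflict graph edgeless and places one heavy job at a source, whereas you take $K_n$ with one heavy job; both rest on the same mechanism, namely that constraint~\eqref{eq:processing-time2:sources} forces $y\ge\max_v p(v)$, but the paper's choice has the side benefit (exploited in the remark following its proof) that all instances in the argument are conflict-free, a property your $K_n$ family loses. For the second direction your witness spreads the machine budget~\eqref{eq:machines2} uniformly over $k=\lceil\sqrt{2n}\rceil$ fractional representatives, each with load $(n-i+1)/k$, and balances this against the load $(k+1)/2$ accumulated at the source, giving $\underline{y}=O(\sqrt n)$; the paper instead uses geometrically decaying weights $\bar x_{ii}=1/2^i$ with $\bar x_{0j}=1/2^j$, which yields the much stronger bound $\underline{y}<2$. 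Your bookkeeping is sound: the cover constraints telescope to exactly $1$ in both ranges of $j$, the values $(k-j+1)/k$ lie in $[0,1]$, and since $G$ is edgeless the constraints~\eqref{eq:edge2:non-sources} and~\eqref{eq:edge2:sources} indeed reduce to $x_{vu}\le x_{vv}$, which your solution satisfies with equality on the nonzero entries. Both certificates suffice for unboundedness; the paper's is sharper and exhibits the weakness of the (RF) relaxation more dramatically (a gap of $n/2-2$ rather than $n/2-O(\sqrt n)$), while yours is arguably easier to find since it only requires equalizing two load expressions.
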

\begin{proof}
First consider the family of instances where $G$ has~$n\ge 2$ vertices and no edges, $m=n$, $p(u)=n$ for a given $u \in V$, $p(v)=1$ for all $v \in V\setminus\{u\}$, and any ordering $\prec$ of the vertices of $G$ where $u$ is a source.
    For such instances, it is clear that $\underline{\tilde  y} = (2n-1)/n < 2$ and $\underline{y} \geq n$.

    Consider now a graph $G$ with $n\geq 4$ vertices $V=\{0,\ldots,n-1\}$ and no edges,  $m=2$, and $p(v)=1$ for all $v \in V$.
    Let us assume the ordering $i \prec i+1$ for every $i \in [n-2]$.
    Note that vertex 0 is the only source in $\prec$ (i.e. $S = \{0\}$).
    Consider now a vector $\bar x \in [0,1]^{n+\bar e -1}$ with $\bar e = n(n-1)/2$  such that 
    \begin{enumerate}
        \item $\bar x_{ii}=1/2^i$ for all $i \in V\setminus\{0\}$,
        \item $\bar x_{ij} = \bar x_{ii} =1/2^i$ for all $i,j \in V\setminus\{0\}$ with  $i<j$,
        \item $\bar x_{0j} = 1 - \sum_{i=1}^j 1/2^i = 1/2^j$ for all $j \in V\setminus\{0\}$.
    \end{enumerate}
     Note that $\sum_{i=1}^{n-1} \bar x_{ii} = \sum_{i=1}^{n-1} 1/2^i < 1$, and so $\bar x$ satisfies~\eqref{eq:machines2}.
    Since $G$ has no edges and $\sum_{i=0}^j \bar x_{ij}  = 1$ for all $j\in V\setminus\{0\}$, inequalities~\eqref{eq:cover2},~\eqref{eq:edge2:non-sources}, and~\eqref{eq:edge2:sources} hold for $\bar x$.
    Moreover, the value of~$\bar x$ is $\sum_{i=0}^{n-1} 1/2^i < 2$, and thus $\underline{y} < 2$. 
    On the other hand, $ \underline{\tilde  y} = n/2$ again by Observation~\ref{obs:lp-value}.
    Therefore, we conclude that $ \underline{\tilde  y} - \underline{y}$ is not bounded.
\end{proof}

Note that in the previous proof  all instances contain empty graphs (i.e. graphs with no edges), and so the proposition holds for the version without conflicts.
Additionally, the proof implies that it may be the case that every possible choice for the vertex ordering $\prec$ yields a representative formulation whose linear relaxation is weaker than the one given by the assignment formulation.
We also observe that if $n<m$, then constraint~\eqref{eq:machines2} of~(RF) is trivially satisfied and the value of its linear relaxation is at least the total processing time divided by~$n$, and this is strictly larger than the linear relaxation of~(AF).

We remark that the relation between the optimal values of the linear relaxations for \pmc\ presented in Proposition~\ref{prop:lp-comparison} is not observed when comparing the assignment and representatives models for the vertex coloring problem (\vcp).
In fact, one may prove that the optimal value of the linear relaxation of the representatives formulation for \vcp\ is at least as large as the optimal value of the linear relaxation of the assignment formulation for \vcp. 
To prove this, first assume (w.l.o.g.) that $V = \{1, \ldots, n\}$, and observe that the assignment model for \vcp\ has a binary variable $\tilde y_{k}$, for every $k \in V$, identifying whether the color~$k$ is used in the coloring or not.
Finally, given a solution $x$ of the linear relaxation of the representative model for \vcp, one may define a feasible  solution of same value in the linear relaxation of the assignment model for \vcp\ as follows: $\tilde y_k = x_{kk}$  for all $k \in V$, and  $\tilde x_{vk} = x_{kv}$ for all $v \in V, k \in \bar N^-[v]$.

The difference between the linear relaxation of the models for \vcp\ and \pmc\ shown above is due to the distinct objective functions of these problems: one has to minimize the number of distinct colors in the former, and the makespan in the latter.

\section{Representatives polytope} \label{sec:rep-poly}
In this section, we define a relaxation of the polytope associated with the representatives formulation, and present a polyhedral study of such polytope.
Our goal is to take advantage of the polyhedral results in the literature of \vcp\ to obtain valid (strong) inequalities for \pmc.

First recall that there is no solution satisfying all constraints of the formulation if $m < \chi(G)$.
If there is no coloring of $G$ with $m$ colors such that a vertex $v \in V$ induces a singleton color class $\{v\}$ (i.e., job $v$ appears alone in a machine), then every feasible solution satisfies $x_{vv}=0$ if $\bar N^+(v)=\emptyset$.
Hence, even when $m\geq \chi(G)$, the dimension of the polytope associated with the proposed formulation may depend on the existence of certain colorings of $G$.

To avoid making additional assumptions on the existence of certain types of colorings of $G$ and on the ordering of its vertices, we focus on studying the facial structure of a relaxed polytope $\poly(G,p) \subseteq \R \times \R^{n+\bar e-|S|}$ defined as 
$\conv \{(y,x) \in \R_\ge \times \{0,1\}^{n+\bar e - |S|} : (y,x) \text{ satisfies } \eqref{eq:cover2} -\eqref{eq:processing-time2:sources}\}$.
In other words, $\poly(G,p)$ is the relaxation of the polytope associated with (RF) that is obtained by removing capacity constraint~\eqref{eq:machines2}. 
Clearly,  every valid inequality for this polytope is also valid for the representatives polytope of \pmc.
We next explore the relationship between \vcp\ and \pmc\ to study the facial structure of~$\poly(G,p)$.

\cite{CamCamCor08} provided a detailed polyhedral study of the polytope associated with a formulation for the vertex coloring problem based on representatives.
Let us denote such polytope by $\polyR(G)$, and observe that inequalities \eqref{eq:cover2}--\eqref{eq:edge2:sources} are precisely the same constraints of their formulation.
As a consequence, it holds that 
$\polyR(G) =\conv\{x \in \{0,1\}^{n + \bar e -|S|} : x \text{ satisfies } \eqref{eq:cover2} - \eqref{eq:edge2:sources}\}$.

We next prove a simple lemma that allows us to obtain valid inequalities for~$\poly(G,p)$ from valid inequalities for $\polyR(G)$ in such a way that the dimension of the corresponding face is increased.
Such an increment in the dimension is desirable since (RF) has a variable $y$ that does not exist in the representatives model for \vcp.
In what follows,  an inequality $\pi x \leq \pi_0$ with $x \in \R^\ell$ is denoted by $(\pi,\pi_0) \in \R^\ell \times \R$.
Similarly, an inequality $\delta y + \pi x \leq \pi_0$ with $y \in \R$ and~$x \in \R^\ell$ is denoted by $(\delta, \pi,\pi_0) \in \R \times \R^\ell \times \R$.

\begin{lemma}\label{lem:lifting}
    Let $(\pi,\pi_0) \in \R^{n+\bar e-|S|}\times \R$ 
be a valid inequality for $\polyR(G)$, and let $F=\{x \in \polyR(G) : \pi x = \pi_0\}$ be its induced face.
    It holds that $(0,\pi,\pi_0) \in \R \times \R^{n+\bar e-|S|}\times \R$ is a valid inequality for $\poly(G,p)$, and $\dim(F') \geq \dim(F)+1$ where $F'=\{(y,x) \in \poly(G,p) : \pi x = \pi_0\}$. 
\end{lemma}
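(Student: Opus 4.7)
The plan is to handle the two claims separately. For validity, I would first argue that the projection of $\poly(G,p)$ onto the $x$-coordinates is contained in $\polyR(G)$: the defining constraints of $\polyR(G)$---namely \eqref{eq:cover2}--\eqref{eq:edge2:sources}---all appear in the definition of $\poly(G,p)$. Hence $\pi x \le \pi_0$ holds for every $(y,x) \in \poly(G,p)$, and since the $y$-coefficient of the lifted inequality is zero, $(0,\pi,\pi_0)$ is valid for $\poly(G,p)$.

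For the dimension claim, I would exhibit $d+2$ affinely independent points of $F'$, where $d=\dim(F)$. Take $d+1$ affinely independent integer points $x^0,\ldots,x^d$ of $F$; these are available because $\polyR(G)$ is an integer polytope and $F$ is one of its faces, hence the convex hull of its integer points. Lift each $x^i$ to $(y^i, x^i)$ by letting $y^i$ be the maximum machine processing time induced by $x^i$, so that \eqref{eq:processing-time2:non-sources}--\eqref{eq:processing-time2:sources} are satisfied and $(y^i, x^i) \in \poly(G,p)$; since $\pi x^i = \pi_0$ and the $y$-coefficient of the lifted inequality is zero, $(y^i, x^i) \in F'$ for every $i$. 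The extra point I would take is $(y^0+1, x^0)$, which is still in $\poly(G,p)$ because increasing $y$ preserves the processing-time constraints, and still in $F'$ because its $x$-part is unchanged.

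The only step that really needs care is the affine independence of these $d+2$ points, since the last one duplicates the $x$-part of $(y^0, x^0)$. From a putative dependence $\sum_{i=0}^{d}\alpha_i (y^i, x^i) + \beta (y^0+1, x^0) = 0$ with $\sum_{i=0}^{d}\alpha_i + \beta = 0$, the $x$-projection reads $(\alpha_0+\beta)x^0 + \sum_{i=1}^{d}\alpha_i x^i = 0$, an affine dependency on $\{x^0,\ldots,x^d\}$ with coefficients summing to zero, forcing $\alpha_i=0$ for $i\ge 1$ and $\alpha_0+\beta=0$. Substituting back into the $y$-equation yields $(\alpha_0+\beta)y^0 + \beta = \beta = 0$, and hence $\alpha_0 = 0$ as well. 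This shows $\dim(F') \ge d+1 = \dim(F)+1$, as required, and the main obstacle---that the extra dimension must come entirely from the $y$-direction rather than from a new $x$-vector---is handled precisely by this short calculation.
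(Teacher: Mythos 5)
Your proof is correct and takes essentially the same approach as the paper's: lift $\dim(F)+1$ affinely independent integer points of $F$ to $\poly(G,p)$ with a feasible $y$-value and add one extra point that differs only in the $y$-coordinate, then check affine independence. The paper uses the uniform value $y=\sum_{v\in V}p(v)$ for all lifted points and attaches the incremented $y$ to the centroid of the integer points rather than to a duplicate of one of them, but this is a cosmetic difference.
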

\begin{proof}
    It is clear that~$(0,\pi,\pi_0)$ is valid for $\poly(G,p)$.
    Consider a set $B:=\{x^i\}_{i\in [\ell]} \subseteq F\cap \{0,1\}^{n+ \bar e-|S|}$ of $\ell$ affine independent vectors where \mbox{$\ell = \dim(F) +1$}.
We define a vector $\bar x \in \R^{n+\bar e-|S|}$ such that $\bar x = \left(\sum_{i \in [\ell]} x^i \right)/\ell$.
    Because~$\bar x$ is a convex combination of the vectors in $B$, we have $\bar x \in F$, and hence~$(T^*+1, \bar x ) \in F'$, where $T^* = \sum_{v \in V}p(v)$.
    Moreover, it is clear that $(T^*,x^i)$ belongs to $F'$ for every~$i \in [\ell]$.
    Since the vectors in $B$ are affine independent, $\{(T^*+1, \bar x )\}\cup\{(T^*,x^i)\}_{i\in [\ell]}$ is a set of $\ell+1$ affine independent vectors in $F'$.
    Consequently, we have $\dim(F') \geq \ell+1 -1 = \dim(F)+1$.
\end{proof}

We next show how to take advantage of the polyhedral studies of $\polyR(G)$ described in the literature to investigate the facial structure of $\poly(G,p)$.

\begin{theorem}\label{thm:dim}
   $\poly(G,p)$ is full-dimensional, i.e., $\dim(\poly(G,p)) = n+\bar e -|S| +1$.
\end{theorem}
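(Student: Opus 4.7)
The plan is to derive the theorem as an immediate consequence of Lemma~\ref{lem:lifting} combined with the known fact that the vertex-coloring representatives polytope $\polyR(G)$ is full-dimensional, i.e., $\dim(\polyR(G)) = n + \bar e - |S|$; this was established by \cite{CamCamCor08} as the starting point of their polyhedral analysis. I would apply Lemma~\ref{lem:lifting} to the trivial valid inequality $(\pi, \pi_0) = (0, 0) \in \R^{n + \bar e - |S|} \times \R$, whose induced face $F$ coincides with $\polyR(G)$ and therefore has dimension $n + \bar e - |S|$. The lemma then yields $\dim(F') \geq n + \bar e - |S| + 1$ for $F' = \{(y, x) \in \poly(G,p) : 0 \cdot x = 0\} = \poly(G,p)$, and since $\poly(G,p) \subseteq \R^{n + \bar e - |S| + 1}$, this bound is tight.

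To make the lifting step transparent, its effect is to take any family of $n + \bar e - |S| + 1$ affinely independent integer points of $\polyR(G)$ and pair them with the $y$-coordinate $T^{*} := \sum_{v \in V} p(v)$, which provides enough slack for constraints~\eqref{eq:processing-time2:non-sources}--\eqref{eq:processing-time2:sources}, and then add one additional point with $y = T^{*} + 1$ to gain the extra affine direction along the $y$-axis. The integer points themselves are constructed inside the proof of full-dimensionality of $\polyR(G)$, so nothing further needs to be done on the scheduling side.

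The main obstacle is really only invoking this full-dimensionality of $\polyR(G)$; the lifting to $\poly(G,p)$ is essentially free because $y$ decouples from $x$ once it is chosen sufficiently large. The core of the Campelo--Campelo--Correa argument exploits that constraints~\eqref{eq:cover2} are inequalities (so that over-covering a vertex by several representatives is admissible): starting from the ``identity coloring'' in which every non-source vertex represents itself and no anti-edge variable is active, individual variables $x_{uv}$ for $\{u,v\} \in \bar E$ can be switched on one at a time, and these single-coordinate perturbations produce enough affinely independent feasible $0/1$-vectors. Should one prefer a self-contained argument, this same construction can be imported verbatim and then lifted via the recipe above.
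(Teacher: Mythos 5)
Your proposal is correct and follows exactly the paper's own argument: apply Lemma~\ref{lem:lifting} to the trivial inequality $(\vec 0,0)$ so that $F=\polyR(G)$ and $F'=\poly(G,p)$, then invoke $\dim(\polyR(G))=n+\bar e-|S|$ from \cite{CamCamCor08} and conclude by the ambient dimension bound. The additional remarks on how the lifting pairs the affinely independent points of $\polyR(G)$ with $y=T^*$ and adds one point at $y=T^*+1$ simply restate the content of the lemma's proof, so nothing is missing.
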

\begin{proof}
    Consider the trivial valid inequality $(\vec 0, 0) \in \R^{n+\bar e-|S|}\times \R$ of $\polyR(G)$.
    Apply Lemma~\ref{lem:lifting} with $(\pi, \pi_0) = (\vec 0, 0)$ and note that $F=\polyR(G)$ and $F'=\poly(G,p)$.
    Thus we have $\dim(\poly(G,p)) \geq \dim(\polyR(G))+1$.
    Since $\dim(\polyR(G))=n+\bar e -|S|$ as shown by~\cite{CamCamCor08}, it follows that $\dim(\poly(G,p)) = n+\bar e -|S| +1$. 
\end{proof}

\begin{theorem}\label{thm:trivial-facets}
    The following inequalities induce facets of $\poly(G,p)$:
    \begin{enumerate}[(i)]
        \item $x_{vv} \le 1$ for every $v \in V\setminus S$,
        \item $x_{vv}\ge 0$ for every $v \in T$,
        \item $x_{vu} \ge 0$ for every $v \in V\setminus T$, and $u \in \bar N^+(v)$ such that $|\bar N^-(u)|\ge 2$,
        \item $\sum_{u \in \bar N^-[v]} x_{uv} \ge 1$ for every $v \in V\setminus S$.
    \end{enumerate}
\end{theorem}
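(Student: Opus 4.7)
The plan is to exploit Lemma~\ref{lem:lifting} together with the polyhedral results of~\cite{CamCamCor08} on $\polyR(G)$. The key step is the following lifting corollary, which I would state and apply four times: whenever $(\pi,\pi_0)\in\R^{n+\bar e-|S|}\times\R$ is facet-defining for $\polyR(G)$, the lifted inequality $(0,\pi,\pi_0)$ is facet-defining for $\poly(G,p)$. Indeed, by Lemma~\ref{lem:lifting} the induced face $F'$ satisfies $\dim(F')\ge \dim(F)+1 = (n+\bar e-|S|-1)+1 = n+\bar e-|S|$; since $(\pi,\pi_0)$ is a non-trivial valid inequality, $F'$ is a proper face of $\poly(G,p)$, and by Theorem~\ref{thm:dim} its dimension is at most $n+\bar e-|S|$. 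Hence equality holds and $F'$ is a facet.

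With this corollary in hand, the task reduces to verifying that each of the four inequalities is facet-defining for $\polyR(G)$ under the stated conditions. Each of these is established in the polyhedral study by~\cite{CamCamCor08} of the asymmetric representatives formulation for \vcp: (i) the upper bound $x_{vv}\le 1$ is facet-defining for every non-source $v$; (ii) the bound $x_{vv}\ge 0$ is facet-defining exactly when $v\in T$; (iii) $x_{vu}\ge 0$ is facet-defining when $u$ has at least one potential representative other than $v$ itself, i.e., when $|\bar N^-(u)|\ge 2$; and (iv) the cover inequality is facet-defining for every non-source vertex. A single application of the lifting corollary per case then yields (i)--(iv) as facets of $\poly(G,p)$.

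The main obstacle is matching each condition in the statement with the precise hypothesis needed for facet-definingness in $\polyR(G)$, and confirming that these hypotheses cannot be weakened. For (ii), if $v\notin T$ then for any $u\in \bar N^+(v)$ inequality~\eqref{eq:edge2:non-sources} yields $x_{vu}\le x_{vv}$, so $x_{vv}=0$ would additionally force $x_{vu}=0$ for every such $u$, cutting the dimension of the face by more than one and ruling out a facet. For (iii), if $|\bar N^-(u)|=1$ then the unique element of $\bar N^-(u)$ is $v$, and the cover constraint for $u$ reduces to $x_{vu}+x_{uu}\ge 1$, forcing $x_{uu}=1$ on the face $\{x_{vu}=0\}$ and again reducing its dimension below that of a facet. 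Once these conditions are matched against the facet characterizations in~\cite{CamCamCor08}, the four claims follow uniformly via one invocation each of the lifting corollary, and no ad hoc construction of affinely independent points is needed.
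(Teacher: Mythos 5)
Your proposal is correct and follows essentially the same route as the paper: invoke the facet results of \cite{CamCamCor08} for $\polyR(G)$, lift via Lemma~\ref{lem:lifting}, and conclude using the full-dimensionality from Theorem~\ref{thm:dim}. The explicit ``lifting corollary'' you isolate (facet of $\polyR(G)$ implies facet of $\poly(G,p)$) is exactly the implicit content of the paper's two-line proof, and your dimension count making it precise is sound.
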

\begin{proof}
    The previous inequalities were shown to be facet-defining for $\polyR(G)$~\citep{CamCamCor08}.
    Thus the result follows from using Lemma~\ref{lem:lifting} for each of the inequalities, and from Theorem~\ref{thm:dim}, which shows that the polytope is full-dimensional.
\end{proof}

\subsection{Stable set inequalities}\label{sec:stable-set}

\cite{CamMouSan16} presented a detailed study of the facial structure of the representatives formulation 
for the $t$-fold coloring problem, a generalization of the classical vertex coloring problem where each vertex is assigned~$t\in\Z_>$ different colors.
In particular, they showed that $\polyR(G)$ inherits facet-defining inequalities from the stable set polytope.

A \emph{stable set} of a graph~$H$ is a subset of pairwise non-adjacent vertices. 
The \emph{stable set polytope} of~$H$ is defined as
\[
\stab(H)=\conv\{z\in \{0,1\}^{|V(H)|} \colon  z_u + z_v \leq 1, \; \text{ for all } \{u,v\}\in E(H)\}.
\]
It is known that the stable set polytope is full-dimensional and there are many results about its facial structure~\cite[see][]{CheVri02B,Tro75,GilTro81,GalSas97,CheVri02A,EOSV08,OriSta08}. 
Let us denote by~$G^+(u)$ the subgraph of $G$ induced by the vertices in $\bar N^+(u)$ for each~$u\in V(G)\setminus T$.

\begin{theorem}[\cite{CamMouSan16}]\label{thm:stab}
    Let $u\in V$. 
    If \(\sum_{v\in \bar N^+(u)} \lambda_v z_v \leq  \lambda_0\) is
    facet-defining for~$\stab(G^+(u))$, then 
    \(\sum_{v\in \bar N^+(u)} \lambda_v x_{uv} \leq  \lambda_0 \gamma_u\)
    is facet-defining for~$\polyR(G)$,  where $\gamma_u=1$ if~$u\in S$, and~$\gamma_u=x_{uu}$ otherwise. 
\end{theorem}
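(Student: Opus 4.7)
My plan is to split the argument into validity and dimension-counting. Since $\polyR(G)$ is shown by Câmpelo, Campêlo and Corrêa (2008) to be full-dimensional, the face-defining claim reduces to producing $\dim(\polyR(G))$ affinely independent integer points of $\polyR(G)$ that satisfy the inequality at equality, which I will build by combining a facial basis for $\stab(G^+(u))$ with auxiliary representative-colorings of $G$.

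For validity, I would take an arbitrary integer $x \in \polyR(G)$ and set $W := \{v \in \bar N^+(u) : x_{uv}=1\}$. The edge inequalities \eqref{eq:edge2:non-sources}--\eqref{eq:edge2:sources} imply that $W$ is a stable set of $G^+(u)$, so its incidence vector satisfies $\sum_{v \in \bar N^+(u)} \lambda_v x_{uv} \le \lambda_0$. If $u \in S$ this is exactly the claim, since $\gamma_u = 1$. If $u \notin S$ and $x_{uu}=1$, the same inequality gives the claim with $\gamma_u = x_{uu} = 1$. Finally, if $u \notin S$ and $x_{uu}=0$, the inequality $\sum_{w \in K} x_{uw} \le x_{uu}$ from \eqref{eq:edge2:non-sources} forces $x_{uv}=0$ for every $v \in \bar N^+(u)$, so both sides vanish. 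Hence the inequality is valid.

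For facet-definingness, let $k := |\bar N^+(u)| = \dim(\stab(G^+(u)))$ and let $z^1,\dots,z^{k}$ be a maximal set of affinely independent incidence vectors of stable sets $W_1,\dots,W_{k}$ of $G^+(u)$ whose incidence vectors lie on the facet $\sum \lambda_v z_v = \lambda_0$. Each $W_i \cup \{u\}$ is itself a stable set of $G$, which I would extend to a proper coloring of $G$ whose representative-encoding $x^i$ lives in the face $F := \{x \in \polyR(G) : \sum_{v} \lambda_v x_{uv} = \lambda_0 \gamma_u\}$. To obtain the remaining affinely independent points of $F$, I would for each pair $(v,w)$ with $v \in V$, $w \in \bar N^-[v]$, and $(v,w) \notin \{u\} \times \bar N^+[u]$ construct two integer points of $F$ differing only in the $x_{vw}$-coordinate; this is done by recoloring the component of $G - (W_i \cup \{u\})$ (for a suitably chosen $i$) so as to promote or demote $w$ to representative status, or move $v$ between two representatives, all while leaving the color class $W_i \cup \{u\}$ untouched. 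Along with the points $x^i$, this produces $n + \bar e - |S|$ affinely independent vectors in $F$, proving $\dim(F) = \dim(\polyR(G)) - 1$, and simultaneously forcing any inequality tight on $F$ to be a positive multiple of ours.

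The main obstacle is precisely the construction step in the preceding paragraph: one must check that the ``free'' recolorings exist without disturbing the distinguished class $W_i \cup \{u\}$. This requires the right choice of ordering $\prec$ and a case analysis depending on whether $v$ or $w$ equals $u$, lies in $\bar N^+(u)$, or lies elsewhere; in each case one uses singleton color classes (or pairwise merges of non-adjacent vertices disjoint from $\bar N^+[u]$) to absorb the recoloring locally. Once this combinatorial step is done, the linear-algebraic conclusion — that the coefficients outside the distinguished rows must vanish and the distinguished ones must match $(\lambda,-\lambda_0)$ up to a positive scalar — follows by standard affine-independence bookkeeping, and Theorem~\ref{thm:dim}-style dimension counting finishes the proof.
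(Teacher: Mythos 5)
First, note that the paper does not prove this statement: Theorem~\ref{thm:stab} is imported verbatim from \cite{CamMouSan16} and is used as a black box (only its consequence for $\poly(G,p)$, Theorem~\ref{thm:stab-facets}, is proved here, via Lemma~\ref{lem:lifting}). So there is no in-paper proof to compare against, and your attempt has to be judged on its own. Your validity argument is correct and complete: constraints~\eqref{eq:edge2:non-sources}--\eqref{eq:edge2:sources} do force $W=\{v\in\bar N^{+}(u): x_{uv}=1\}$ to be stable in $G^{+}(u)$, and the case split on $x_{uu}$ handles the right-hand side $\lambda_0\gamma_u$ properly.

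The facet-defining half, however, is only an outline, and as sketched it contains a concrete flaw when $u\notin S$. Every point you construct keeps ``the color class $W_i\cup\{u\}$ untouched,'' hence has $x_{uu}=1$. But then all your points lie in the hyperplane $\{x_{uu}=1\}$, and $n+\bar e-|S|$ affinely independent points of the face $F$ cannot all lie there: $F$ also contains integer points with $x_{uu}=0$ (take any coloring in which $u$ is represented by some $w\in\bar N^{-}(u)$; then $x_{uv}=0$ for all $v\in\bar N^{+}(u)$ and both sides of the inequality vanish), so if $F$ contained $n+\bar e-|S|$ affinely independent points inside $\{x_{uu}=1\}$ it would in fact be full-dimensional, which is absurd. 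Equivalently, in the indirect method, the points with $x_{uu}=1$ only determine the coefficients $a_{uv}$ up to a common scalar and the combination $a_{uu}+\theta\lambda_0=b$; you need a point of $F$ with $x_{uu}=0$ to conclude $b=0$ and $a_{uu}=-\theta\lambda_0$, i.e., to recover the term $-\lambda_0 x_{uu}$ on the right-hand side. This is fixable by adding such colorings to your collection, but it is missing. Beyond that, the step you yourself flag as ``the main obstacle'' --- constructing, for every free coordinate $x_{vw}$, two feasible colorings of all of $G$ that differ only in that coordinate while preserving membership in $F$ --- is precisely where the real work of the proof lies (it is the content of the argument in \cite{CamMouSan16}), and deferring it means the proposal does not yet constitute a proof.
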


We next prove an analogous result for $\poly(G,p)$.
Essentially, we show that the facet-defining inequalities of the stable set polytope are also facet-defining for $\poly(G,p)$.

\begin{theorem}\label{thm:stab-facets}
    Let $u\in V$. 
    If \(\sum_{v\in \bar N^+(u)} \lambda_v z_v \leq  \lambda_0\) is
    facet-defining for~$\stab(G^+(u))$, then 
    \(\sum_{v\in \bar N^+(u)} \lambda_v x_{uv} \leq  \lambda_0 \gamma_u\)
    is facet-defining for~$\poly(G,p)$, where $\gamma_u=1$ if~$u\in S$, and~$\gamma_u=x_{uu}$ otherwise. 
\end{theorem}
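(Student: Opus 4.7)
The plan is to derive Theorem~\ref{thm:stab-facets} from Theorem~\ref{thm:stab}, Lemma~\ref{lem:lifting}, and Theorem~\ref{thm:dim}, following exactly the template used in the proof of Theorem~\ref{thm:trivial-facets}.

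First, I would invoke Theorem~\ref{thm:stab} to conclude that the inequality in question, which I denote by $(\pi,\pi_0) \in \R^{n+\bar e - |S|} \times \R$, is facet-defining for $\polyR(G)$. Hence the face $F = \{x \in \polyR(G) : \pi x = \pi_0\}$ has dimension $\dim(\polyR(G)) - 1 = n + \bar e - |S| - 1$. Applying Lemma~\ref{lem:lifting} to $(\pi,\pi_0)$ then immediately yields that $(0,\pi,\pi_0)$ is valid for $\poly(G,p)$ and that the face $F' = \{(y,x) \in \poly(G,p) : \pi x = \pi_0\}$ satisfies $\dim(F') \ge \dim(F) + 1 = n + \bar e - |S|$. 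Since Theorem~\ref{thm:dim} gives $\dim(\poly(G,p)) = n + \bar e - |S| + 1$, only two possibilities remain: either $F'$ is a facet, or $F' = \poly(G,p)$.

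To rule out the second alternative, I would observe that since $\pi x \leq \pi_0$ is a proper facet-defining inequality of the full-dimensional polytope $\polyR(G)$, there must exist $\bar x \in \polyR(G)$ with $\pi \bar x < \pi_0$. Choosing $\bar y = \sum_{v \in V} p(v)$ then makes $(\bar y, \bar x)$ satisfy all processing-time constraints~\eqref{eq:processing-time2:non-sources}--\eqref{eq:processing-time2:sources}, because the left-hand side of each such constraint is at most $\sum_{v \in V} p(v)$ whenever $\bar x \in [0,1]^{n + \bar e - |S|}$. Hence $(\bar y, \bar x) \in \poly(G,p) \setminus F'$, so $F' \subsetneq \poly(G,p)$ and therefore $\dim(F') = n + \bar e - |S|$; that is, $F'$ is a facet. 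I do not expect a substantial obstacle in executing this plan, since the argument is essentially a mechanical chaining of previously established results. The only step that requires any care is verifying that the lifted face is proper in $\poly(G,p)$, but this reduces to the trivial observation above about an admissible choice of $y$.
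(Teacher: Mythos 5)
Your proposal is correct and follows essentially the same route as the paper: invoke Theorem~\ref{thm:stab} to get a facet of $\polyR(G)$, lift via Lemma~\ref{lem:lifting}, and conclude with the full-dimensionality from Theorem~\ref{thm:dim}. The only difference is that you explicitly verify that the lifted face $F'$ is proper (by exhibiting a point $(\sum_{v\in V}p(v),\bar x)$ off the face), a detail the paper leaves implicit; this is a harmless and correct addition.
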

\begin{proof}
    The claimed result follows from applying Lemma~\ref{lem:lifting} to the inequalities of of $\mathcal{R}(G)$ obtained via Theorem~\ref{thm:stab}, and using that $\poly(G,p)$ is full-dimensional as shown in Theorem~\ref{thm:dim}.
\end{proof}

The previous result guarantees that the following so-called \emph{external inequalities} in~\cite{CamCamCor08}
are valid for $\poly(G,p)$ and induce facets under certain conditions.
We remark that the external inequalities strengthen the rank inequalities introduced by~\cite{chvatal75} for the stable set polytope.
These are precisely the inequalities $\sum_{v \in U} z_v \le \alpha(G[U])$ 
 for $U \subseteq V$, where $\alpha(G[U])$ is the size of the largest stable set in the subgraph of $G$ induced by the vertices belonging to $U$.

\begin{corollary}
    Let $v \in V\setminus T$ and $U\subseteq \bar N^+(v)$.
    The following external inequality is valid for $\poly(G,p)$:
    \begin{equation}\label{ineq:external}
        \sum_{u \in U}\frac{1}{\alpha_u}x_{vu} \leq \gamma_v,
    \end{equation}
    where $\alpha_u$ denotes the maximum size of a stable set containing $u$ in $G[U]$, $\gamma_v=1$ if~$v\in S$, and~$\gamma_v=x_{vv}$ otherwise. Moreover, it is facet-defining if $U$ induces a maximal clique.
\end{corollary}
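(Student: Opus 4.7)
My plan is to derive the corollary as a direct consequence of Theorem~\ref{thm:stab-facets}, by first certifying the validity of the external inequality for the underlying stable set polytope $\stab(G^+(v))$ and then specializing to the clique case for the facet-defining claim.

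For validity, the first step is to verify that $\sum_{u\in U}\tfrac{1}{\alpha_u}z_u \leq 1$ is valid for $\stab(G^+(v))$. I would take the characteristic vector of an arbitrary stable set $S$ of $G^+(v)$, consider the restriction $S\cap U$, and observe that it is a stable set of $G[U]$ whose cardinality is a lower bound on $\alpha_u$ for every $u\in S\cap U$. Summing the reciprocals $1/\alpha_u$ over $u\in S\cap U$ then bounds the left-hand side by $|S\cap U|/|S\cap U|=1$ whenever $S\cap U$ is nonempty, and the case $S\cap U = \emptyset$ is trivial. Once this inequality is valid for $\stab(G^+(v))$, the transfer to $\poly(G,p)$ follows from the validity portion of Theorem~\ref{thm:stab-facets} (essentially the content of Lemma~\ref{lem:lifting} applied to $(\pi,\pi_0)$), yielding~\eqref{ineq:external}.

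For the facet claim, I would specialize the same framework to the case where $U$ is a maximal clique in $G^+(v)$. Then every stable set in $G[U]$ has exactly one vertex, so $\alpha_u=1$ for every $u\in U$ and~\eqref{ineq:external} collapses to the clique-type inequality $\sum_{u\in U}x_{vu}\leq \gamma_v$. The companion stable set inequality $\sum_{u\in U}z_u\leq 1$ is a classical facet of $\stab(G^+(v))$ precisely because $U$ is a maximal clique in that subgraph, and applying Theorem~\ref{thm:stab-facets} directly then produces the claimed facet of $\poly(G,p)$.

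The main subtlety I anticipate is the correct interpretation of ``maximal clique'': one must read it as a maximal clique inside $G^+(v)=G[\bar N^+(v)]$ rather than in $G$ itself, since otherwise the classical Padberg facet result need not apply to $\stab(G^+(v))$. The two regimes $v\in S$ and $v\in V\setminus(S\cup T)$ are already absorbed by the parameter $\gamma_v$ inside Theorem~\ref{thm:stab-facets}, so no additional case analysis is required beyond invoking that theorem.
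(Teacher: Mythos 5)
Your proposal is correct and follows essentially the same route as the paper: validity comes from the rank-type argument on $\stab(G^+(v))$ (every stable set $W$ of $G[U]$ satisfies $\sum_{u\in W}1/\alpha_u\le 1$ since $\alpha_u\ge |W|$ for each $u\in W$) transferred to $\poly(G,p)$, and the facet claim is exactly Padberg's maximal-clique facet of $\stab(G^+(v))$ pushed through Theorem~\ref{thm:stab-facets}. Your remark that maximality must be read in $G^+(v)$ is the right reading of the statement, and the only cosmetic imprecision is attributing the validity transfer to a ``validity portion'' of Theorem~\ref{thm:stab-facets}, which literally speaks only of facets; the transfer instead rests on the standard observation that the jobs represented by $v$ form a stable set of $G^+(v)$ whenever $\gamma_v=1$ and that all $x_{vu}$ vanish otherwise, combined with Lemma~\ref{lem:lifting}.
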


An interesting particular case of the external inequalities occurs when $G[U]$ is a complete graph (i.e., a \emph{clique}):
\begin{align}
     & \sum_{u \in U}x_{vu} \leq \gamma_v & \forall U\subseteq V \text{ such that } G[U] \text{ is a clique.} \label{ineq:clique}
\end{align}
Considering the stable set polytope, these inequalities correspond to the \emph{clique inequalities} and are facet-defining if and only if $U$ induces a maximal clique
as shown by~\cite{padberg1973}.

Another subclass of valid inequalities for $\poly(G,p)$ is obtained from~\eqref{ineq:external} when considering induced cycles in $G$ as follows. 
\begin{align}
& \sum_{u \in U}x_{vu} \leq \frac{|U|-1}{2} \gamma_v  & \forall v\in V, U\subseteq V,  \text{ such that $G[U]$ is an odd cycle.} \label{ineq:odd-cycle}
\end{align}
Indeed, these are rank inequalities as $\alpha(G[U]) = (|U|-1)/2$ if $G[U]$ is an odd cycle, and  they correspond to the \emph{odd-cycle inequalities} proposed by~\cite{padberg1973} in his polyhedral study of the stable set problem.
He mentioned that such inequalities generally do not induce facets and described a lifting procedure to strengthen them.
Moreover, Padberg showed that an odd-cycle inequality induces a facet of $\stab(G)$ if $G$ itself is an odd cycle of length at least~$5$.

\cite{LovSch90} noticed that given an inequality $(\pi,\pi_0) \in \R^n\times \R$, it is unlikely that there exists a polynomial-time algorithm to verify that $(\pi,\pi_0)$ is a rank inequality since such procedure involves computing $\alpha$, an $\NP$-hard problem in general.
The external inequalities present a similar characteristic as calculating $\alpha_v$ is not easier than calculating~$\alpha$.

We shall focus on clique and odd-cycle inequalities since a maximum stable set (containing any given vertex) of the corresponding graphs is easy to compute, and these inequalities proved to be useful in the branch-and-cut approach to \vcp\ by~\cite{MENDEZDIAZ2006826}.
The separation problem of clique inequalities is $\NP$-hard as it corresponds to solving the maximum-weighted clique problem.
On the positive side, the odd-cycle inequalities~\eqref{ineq:odd-cycle} can be separated in polynomial time following essentially the same strategy used for separating the odd-cycle inequalities of the stable set polytope as shown by~\cite{GerSch06} (see also~\cite{RebReiPar12}).
The separation algorithm for the odd-cycle inequalities~\eqref{ineq:odd-cycle} is described next.

Consider a vector $\bar x \in \R^{n+\bar e -|S|}$ satisfying~\eqref{eq:edge2:sources}, a vertex $v \in V\setminus T$, and the subgraph $G^+(v)$ induced by its positive anti-neighborhood in $G$.
For each edge $e=\{u,w\} \in E(G^+(u))$, let us define $f(e) = (\bar \gamma_v - \bar x_{vu} - \bar x_{vw})/2$, where $\bar \gamma_v = \bar x_{vv}$ if $v \notin S$, and $\bar \gamma_v = 1$ otherwise.
For every  $U \subseteq \bar N^+(v)$ such that $C=G[U]$ is an odd cycle, we have
\[f(C) = \sum_{\{u,w\} \in E(C)} f(uw) =  \sum_{\{u,w\} \in E(C)} \frac{\bar \gamma_v - \bar x_{vu} - \bar x_{vw}}{2} = \frac{|U|}{2}\bar \gamma_v - \sum_{u \in U} \bar x_{vu}.\]
Observe that $\bar x$ violates~\eqref{ineq:odd-cycle} if and only if $|U| \bar \gamma_v/2 - \sum_{u \in U } \bar x_{vu} < \bar \gamma_{v}/2$. 
Hence the separation of the odd-cycle inequalities corresponding to $v$ is equivalent to finding a minimum-weight odd cycle in  the edge-weighted graph $(G^+(v), f)$, and then verifying whether its weight is smaller than $\bar \gamma_{v}/2$.
For this, we create a bipartite graph $B$ by splitting each vertex $u$ in $G^+(v)$ into vertices $u^+$ and $u^-$, and adding to $B$, for each edge $\{u,w\}$ in $G^+(v)$,  edges $\{u^+, w^-\}$ and $\{u^-, w^+\}$ of weight $f(\{u,w\})$.
Finally, a minimum-weight odd cycle in $G^+(v)$ can be found by computing a shortest path between $u^+$ and $u^-$ in the weighted graph~$B$ for each $u \in V(G^+(v))$.

For the maximum stable set problem, \cite{Rebennack11} presented an exact separation routine for odd-hole inequalities (which are then lifted to maximal cliques in case of cycles of size 3), a heuristic separation for clique inequalities that randomly selects an edge and extend it to a maximal clique, and a heuristic based on edge-projection to separate the rank inequalities.

To conclude this section, we present the corresponding external inequalities for the assignment formulation for \pmc\ as follows.
Let $\mathcal{Q}(G,p,m)$ be the polytope associated with (AF), that is,
\[\mathcal{Q}(G,p,m) = \conv \{(\tilde x, \tilde y) \in \{0,1\}^{nm} \times \R_\ge : (\tilde x, \tilde y) \text{ satisfying } \eqref{eq:assign}, \eqref{eq:ass:edge}, \text{ and } \eqref{eq:time}\}. \]

\begin{proposition}
    Let $U\subseteq V$, and $k \in [m]$.
    The following inequality is valid for $\mathcal{Q}(G,p,m)$:
    \begin{equation}\label{ineq:ass:external}
        \sum_{u \in U}\frac{1}{\alpha_u}\tilde x_{uk} \leq 1,
    \end{equation}
    where $\alpha_u$ denotes the maximum size of a stable set containing $u$ in $G[U]$.
\end{proposition}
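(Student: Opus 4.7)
The plan is to prove validity by checking the inequality at every integer point of $\mathcal{Q}(G,p,m)$; since the polytope is defined as the convex hull of such points, validity on vertices extends to the whole polytope by convexity. So the entire argument reduces to a combinatorial observation about one machine at a time.

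First I would fix an arbitrary integer feasible solution $(\tilde x, \tilde y)$ and the machine index $k \in [m]$, and define $T := \{u \in U : \tilde x_{uk}=1\}$, i.e., the jobs from $U$ that are assigned to machine $k$. The edge constraints~\eqref{eq:ass:edge} force $\tilde x_{uk}+\tilde x_{vk}\leq 1$ for every $\{u,v\}\in E$, which at an integer point means no two adjacent vertices of $G$ can both lie in $T$. Hence $T$ is a stable set of $G[U]$.

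Next I would use the definition of $\alpha_u$ as the maximum cardinality of a stable set in $G[U]$ that contains $u$. For any $u \in T$, the set $T$ itself is a stable set in $G[U]$ containing $u$, so $\alpha_u \geq |T|$, and thus $1/\alpha_u \leq 1/|T|$. Summing this estimate over $u \in T$ gives
\[
\sum_{u \in U} \frac{1}{\alpha_u} \tilde x_{uk} \;=\; \sum_{u \in T} \frac{1}{\alpha_u} \;\leq\; \sum_{u \in T} \frac{1}{|T|} \;=\; 1,
\]
with the convention that the left-hand side is $0$ when $T = \emptyset$. This establishes the inequality on every vertex of $\mathcal{Q}(G,p,m)$, and therefore on the whole polytope.

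There is no real obstacle: the argument parallels the standard proof that rank inequalities $\sum_{u\in U} z_u \leq \alpha(G[U])$ are valid for $\stab(G)$, refined by the vertex-specific quantities $\alpha_u$ (this is exactly the external-inequality strengthening attributed to~\cite{CamCamCor08}). The only subtlety worth emphasizing in the write-up is that $\alpha_u$ is computed inside $G[U]$ rather than in $G$, and that the bound $\alpha_u \geq |T|$ is what allows each machine's contribution to be controlled uniformly without needing to know $T$ in advance.
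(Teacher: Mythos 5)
Your proof is correct and follows essentially the same route as the paper's: restrict to an integer point, observe that the jobs from $U$ on machine $k$ form a stable set $W$ of $G[U]$, use $\alpha_u \geq |W|$ for each $u \in W$ to bound the sum by $1$, and extend to the polytope by convexity. The only cosmetic difference is that the paper bounds via $\alpha_* = \min_{u\in W}\alpha_u$ while you bound each term by $1/|W|$ directly; your explicit handling of the empty case and the convexity step are fine additions but not substantive departures.
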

\begin{proof}
    Consider a vector~$\tilde x \in \{0,1\}^{nm}$ in $\mathcal{Q}(G,p,m)$. 
    Note that the jobs in $U$ assigned to machine~$k$ form a stable set~$W$.
    Hence $|W| \le \alpha_*$ where $\alpha_*=\min_{u \in W} \alpha_u$.
Then we have 
    \[\sum_{u \in U} \frac{1}{\alpha_u} \tilde x_{uk} = \sum_{u \in W} \frac{1}{\alpha_u} \leq \frac{1}{\alpha_*}  \sum_{u \in W} 1 \leq \frac{1}{\alpha_*} |W| \leq 1.\]
\end{proof}

As in the case of the representatives model, inequalities~\eqref{ineq:ass:external}  strengthen the rank inequalities, which contain the clique and odd-cycle inequalities.
The previous discussion on separation problems also holds for the corresponding inequalities with assignment variables.

\bigskip

\subsection{Subgraph-induced inequalities}

In this section, we study inequalities related to the minimum number of machines required to schedule a given subset of jobs.
In the context of the vertex coloring problem, these inequalities first appeared in~\cite{CamCamCor08}.

For every set $U \subseteq V$, we define $S_U = \{v\in U : \bar N^-(v)\cap U=\emptyset\}$.
In words, $S_U$ is the set of minimal vertices in the suborder $\prec$ restricted to $U$.

\begin{proposition}\label{prop:valid:internal}
    Let $U \subseteq V$. The following inequality is valid for $\poly(G,p)$:
    \begin{equation}\label{ineq:internal}
        \sum_{v \in U\setminus S_U} \: \sum_{ u \in (\bar N^-(v)\setminus U)\cup \{v\} } x_{uv} \geq \chi(G[U])-|S_U|.
    \end{equation}
\end{proposition}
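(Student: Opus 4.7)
The plan is to establish validity of \eqref{ineq:internal} at every integer point $(y, x) \in \poly(G, p)$; since the inequality does not involve $y$, this implies validity on the full polytope. Fix such an $x$ and, setting $x_{vv}:=1$ implicitly for sources $v \in S$, let $R := \{v \in V : x_{vv}=1\}$, so that $S \subseteq R$. The link implication $x_{rv} \leq x_{rr}$ for $v \in \bar N^+(r)$ follows from \eqref{eq:edge2:non-sources} (using a clique of size $1$ or $2$ in $\mathcal{K}(r)$ containing $v$) together with non-negativity, and gives $x_{rv}=1 \Rightarrow r \in R$. Combined with \eqref{eq:cover2}, this lets me build a partition $\{\bar C_r\}_{r \in R}$ of $V$ into stable sets of $G$: put $r(v):=v$ whenever $v \in R$, and otherwise pick $r(v)$ to be any $r \in \bar N^-(v) \cap R$ with $x_{rv}=1$. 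By construction $r \in \bar C_r$ and the remaining members of $\bar C_r$ lie in $\bar N^+(r)$, so $r = \min_\prec \bar C_r$ whenever $\bar C_r \neq \emptyset$.

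Next I restrict the partition to $U$. The nonempty sets $\bar C_r \cap U$ partition $U$ into stable sets of $G[U]$, so at least $\chi(G[U])$ representatives $r \in R$ satisfy $\bar C_r \cap U \neq \emptyset$. For each such $r$ define the witness $v_r^* := \min_\prec (\bar C_r \cap U) \in U$. If $r \in U$, then $r = \min_\prec \bar C_r$ forces $v_r^* = r$ and $x_{v_r^* v_r^*}=1$. If $r \notin U$, then $v_r^* \in \bar C_r \setminus \{r\} \subseteq \bar N^+(r)$, whence $r \in \bar N^-(v_r^*) \setminus U$ and $x_{r v_r^*}=1$. In both cases I obtain a variable $x_{u_r v_r^*}$ of value $1$ with $u_r \in \{v_r^*\} \cup (\bar N^-(v_r^*) \setminus U)$, which is precisely one term of the LHS double sum of \eqref{ineq:internal} at index $v = v_r^*$, provided $v_r^* \in U \setminus S_U$.

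To finish, I count. The witnesses $v_r^*$ are pairwise distinct because the $\bar C_r$'s are disjoint, so at most $|S_U|$ of them belong to $S_U$. Consequently at least $\chi(G[U]) - |S_U|$ of them lie in $U \setminus S_U$, and each yields a distinct value-$1$ term in the LHS. Summing gives the desired lower bound.

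The main obstacle is arranging the partition so that $r = \min_\prec \bar C_r$ for every nonempty class; this follows from the convention $r(v) := v$ for $v \in R$ and is essential for identifying the witness with $r$ itself in the case $r \in U$. The rest of the argument is clean bookkeeping: each $r$ with $\bar C_r \cap U \neq \emptyset$ either contributes a distinct term to the LHS (when $v_r^* \notin S_U$) or consumes a distinct element of the budget $|S_U|$ (when $v_r^* \in S_U$), and these two modes together account for all $\chi(G[U])$ classes touching $U$.
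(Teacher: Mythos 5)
Your proof is correct, and at heart it follows the same strategy as the paper's (very terse) argument: pass to an integral point, read off the induced partition of $V$ into stable classes via the covering and clique constraints, and count how many classes meeting $U$ must contribute a value-one term to the left-hand side of \eqref{ineq:internal}. The bookkeeping is organized differently, though, and in a way that is arguably more robust. The paper first lower-bounds the number of classes meeting $U\setminus S_U$ by $\chi(G[U])-|S_U|$, using that $S_U$ induces a clique, and then asserts that each such class is represented either by a job in $U\setminus S_U$ or by a job outside $U$. Taken literally, that assertion can fail: a class may contain jobs of $U\setminus S_U$ while being represented by a job of $S_U$ (e.g.\ $U=\{a,b\}$ with $a\prec b$ nonadjacent and $x_{ab}=1$; there the right-hand side is $0$, so validity is unharmed, but the representative lies in $S_U$). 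Your witness device --- charging each class meeting $U$ to the $\prec$-minimum of its intersection with $U$, and noting that at most $|S_U|$ witnesses can land in $S_U$ because witnesses are pairwise distinct --- absorbs exactly this case into the $|S_U|$ budget, and as a bonus never uses that $S_U$ is a clique. The one point worth double-checking, which you do handle, is that a witness in $U\setminus S_U$ is never a global source (since $\bar N^-(v)=\emptyset$ forces $v\in S_U$), so the diagonal variable $x_{vv}$ you invoke in the case $r\in U$ genuinely exists in the model.
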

\begin{proof}
    First note that the subgraph $G[U\setminus S_U]$ uses at least $\chi(G[U])-|S_U|$ machines as  $S_U$ induces a complete graph in $G$.
    Moreover, each of such machines is either represented by a job in $U\setminus S_U$ or by a job not in $U$.
    Therefore, inequality~\eqref{ineq:internal} is valid for $\poly(G,p)$.
\end{proof}

In what follows, we focus on the cases where $G[U]$ plays an important role in the  coloring problem and $\chi(G[U])$  is easy to compute.
Let $\ell, q \in \Z$ such that $\ell\ge 2$ and $q\ge 2\ell$.
\cite{Tro75} defined the \emph{web}~$\web$ as the graph with vertex set~$\{ v_{0}, v_{1}, \ldots, v_{q-1} \}$ and edge 
set~$\{ v_{i}v_{j} : \ell \leq |i -j| \leq q-\ell \}$.
The \emph{antiweb}~$\antiweb$ is the complement of~$\web$.
Examples of a web and antiweb are depicted in Figure~\ref{fig:web}.
Trotter also showed that~$\alpha(\web)=\ell$ and~$\alpha(\antiweb)=\left\lfloor \frac{q}{\ell} \right\rfloor$.
\cite{CamCorMouSan13} proved that $\chi(H)=\left\lceil \frac{n}{\alpha(G)} \right\rceil$ if $H=\web$ or~$H=\antiweb$.

\begin{figure}[tbh!]
\centering
\begin{subfigure}{.47\textwidth}
    \centering

\begin{tikzpicture}[
auto,
            node distance = 1cm, scale=0.8
]
        \tikzstyle{every state}=[
            draw = black,
fill = white,
            minimum size = 2mm,
            scale=0.6
        ]
    
\foreach \a in {0,1,...,9}{
\node[state] (v\a) at (-\a*360/10+90: 2cm) {${\a}$};
}

\foreach \a in {0,1,...,5}{

    \foreach \b [evaluate=\b as \i using int(\a+\b)] in {4,5,6}{\pgfmathparse{\i>9?0:\i}
        \ifnum\pgfmathresult>0
            \path (v\a) edge node{} (v\pgfmathresult);
        \fi
    }
}
\end{tikzpicture}
\caption{$W^{10}_4$}

\end{subfigure}\hfill
\begin{subfigure}{.47\textwidth}
    \centering

    \begin{tikzpicture}[
auto,
            node distance = 1cm, scale=0.8
]
        \tikzstyle{every state}=[
            draw = black,
fill = white,
            minimum size = 2mm,
            scale=0.6
        ]
    
        \foreach \a in {0,1,...,9}{
\node[state] (v\a) at (-\a*360/10+90: 2cm) {${\a}$};
}
        
        \foreach \a in {0,1,...,9}{

            \foreach \b [evaluate=\b as \i using int(\a+\b)] in {1,2,3,7,8,9}{\pgfmathparse{\i>9?0:\i}
                \ifnum\pgfmathresult>0
                    \path (v\a) edge node{} (v\pgfmathresult);
                \fi
            }
        }
    \end{tikzpicture}
    \caption{$\bar W^{10}_4$}
\end{subfigure}
\caption{A web graph and its complement.\label{fig:web}}
\end{figure}

A graph $H$ is said to be \emph{perfect} if, for every induced subgraph $H'$ of $H$, it holds that $\chi(H') = \omega(H')$, where $\omega(H')$ denotes the size of the largest clique in~$H'$.
By the strong perfect graph theorem due to~\cite{ChuRobSeyTho06}, perfect graphs can be characterized in terms of odd cycles and their complements, precisely: a graph $H$ is perfect if and only if neither $H$ nor its complement $\bar H$ contains an odd cycle of length at least $5$ as an induced subgraph.
\cite{GROTSCHEL1984325} proved that the vertex coloring problem can be solved in polynomial time on perfect graphs.

As webs and antiwebs contain cycles of size at least 4 and their complements, they represent a main obstacle when solving the coloring problem.
Similarly, one may expect that the same classes of graphs are also relevant when solving \pmc.
Indeed, we shall prove that inequalities~\eqref{ineq:internal} when $G[U]$ is a web or an antiweb induce facets of $\poly(G,p)$ under certain criticality conditions.

A graph $H$ is said to be \emph{$\chi$-critical}  if $\chi(G-v) < \chi(G)$ for all $v \in V(G)$.
\cite{CamMouSan16} showed that a graph $H$
is $\chi$-critical web~$\web$ or antiweb~$\antiweb$ if and only if  $\frac{q-1}{\alpha(H)} \in \Z$.
It is known that $\alpha(\web)=\ell$ and $\alpha(\antiweb)=\left \lfloor \frac{q}{\ell} \right \rfloor$~\cite[see][]{Tro75}.
The following facets of $\polyR(G)$ are based on induced $\chi$-critical webs and antiwebs.

\begin{theorem}[\cite{CamMouSan16}]\label{thm:webs-antiwes}
    Let $U\subseteq V(G)$.
    If~$G[U]$ is $\chi$-critical web or antiweb,
    then inequality~\eqref{ineq:internal} is facet-defining for~$\polyR(G)$.
\end{theorem}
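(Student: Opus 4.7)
My plan is to prove this by the standard indirect method. Validity is already given by Proposition~\ref{prop:valid:internal}, so the work lies in showing the facet property. Since $\polyR(G)$ is full-dimensional (cf.\ Theorem~\ref{thm:dim} and the cited dimension of $\polyR(G)$), it suffices to show that any valid inequality $\lambda^{\top} x \leq \lambda_0$ whose induced face contains the face $F$ of~\eqref{ineq:internal} must satisfy $(\lambda, \lambda_0) = \mu (a, a_0)$ for some scalar $\mu \geq 0$, where $(a,a_0)$ encodes~\eqref{ineq:internal}.

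I would first check that $F$ is proper and nonempty: nonemptiness follows from the existence of a coloring of $G[U]$ using exactly $\chi(G[U])$ colors (guaranteed by $\chi$-criticality) and extended arbitrarily to $V \setminus U$, while properness follows by exhibiting a coloring that splits $U$ into strictly more than $\chi(G[U])$ classes.

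The heart of the proof is a coordinate-by-coordinate analysis via pairs of tight colorings. For each variable $x_{uv}$, I would construct two points in $F$ differing only in the value of $x_{uv}$ and, when needed, the single competing variable $x_{u'v}$ forced by~\eqref{eq:cover2}. Equating $\lambda$-values across such pairs yields three regimes: \emph{(i)} for $v \in V \setminus U$, free reassignment of $v$'s representative forces $\lambda_{uv} = 0$, since the left-hand side of~\eqref{ineq:internal} is unaffected by these swaps; \emph{(ii)} for $v \in S_U$, the same argument applies because $S_U$ does not appear in~\eqref{ineq:internal}; \emph{(iii)} for $v \in U \setminus S_U$, $\chi$-criticality of $G[U]$ ensures a coloring of $G[U]$ with exactly $\chi(G[U])$ colors in which the class of $v$ can be rerepresented by any admissible vertex outside $U$ or by $v$ itself, so $\lambda_{uv}$ takes a common value $\mu$ for every $u \in (\bar N^-(v) \setminus U) \cup \{v\}$, while for $u \in \bar N^-(v) \cap U$, swapping representatives within $U$ preserves the left-hand side and forces $\lambda_{uv} = 0$. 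Finally, substituting any tight point into $\lambda^{\top} x = \lambda_0$ yields $\lambda_0 = \mu(\chi(G[U]) - |S_U|)$, completing the characterization.

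The main obstacle is regime~(iii): producing tight colorings in which the representative of a given $v \in U \setminus S_U$ can be moved between an outside vertex $u \in V \setminus U$ and the vertex $v$ itself, while preserving both the count of color classes on $U \setminus S_U$ and the feasibility of the coloring elsewhere. This is where the specific combinatorial structure of $\chi$-critical webs and antiwebs (their rich automorphism groups and the identity $\chi = (q-1)/\alpha + 1$ cited from \cite{CamCorMouSan13,CamMouSan16}) provides exactly the slack needed to realize each swap; without $\chi$-criticality, no such rearrangement would be available in general and the coordinate comparisons would break down.
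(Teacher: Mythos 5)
First, note that the paper does not prove this statement at all: Theorem~\ref{thm:webs-antiwes} is imported verbatim from \cite{CamMouSan16}, so there is no in-paper proof to compare against. Your proposal is the standard indirect-method skeleton (full dimensionality, then showing any valid inequality tight on the face is a scalar multiple), which is almost certainly the architecture of the cited proof as well; the question is whether you have actually executed it, and you have not.

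The genuine gap is your regime~(iii), which you yourself flag as ``the main obstacle'' and then resolve only by appeal to ``rich automorphism groups'' --- but that is precisely the entire content of the theorem. Two concrete problems. First, your claim that for $v \in U\setminus S_U$ and $u \in \bar N^-(v)\cap U$, ``swapping representatives within $U$ preserves the left-hand side'' is false as stated: if $v$ currently contributes $1$ to the left-hand side of~\eqref{ineq:internal} (via $x_{vv}=1$ or $x_{wv}=1$ with $w\notin U$) and you reassign $v$ to a class represented by $u\in U$, $v$'s term drops to $0$; unless another term increases, the new point either leaves the face or violates validity, so no such pair of tight points exists without a compensating rearrangement. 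Constructing that compensation is exactly where $\chi$-criticality and the web/antiweb structure must be used, and you do not construct it. Second, your accounting of tightness is off: a color class represented by a vertex outside $U$ contributes $|C\cap(U\setminus S_U)|$ to the left-hand side (one unit per member, since every member shares the same outside representative), not one unit per class, so ``extend a $\chi(G[U])$-coloring arbitrarily to $V\setminus U$'' does not automatically yield a point on the face; tightness requires a careful choice of representatives (e.g., classes meeting $S_U$ represented by their $S_U$-vertex, the remaining classes by their $\prec$-minimal element of $U$). You also do not link the common coefficient $\mu$ across distinct $v\in U\setminus S_U$, which needs further tight-point comparisons. As it stands the proposal is a correct plan with the hard combinatorial core left unproved.
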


\begin{theorem}
    Let $U\subseteq V(G)$ and  $S_U$ be the set of sources in the suborder of $\prec$ restricted to $U$, 
    that is, $S_U=\{v\in H: \bar N^-(v)\cap H=\emptyset\}$. 
If~$G[U]$ is $\chi$-critical web or antiweb,
then inequality~\eqref{ineq:internal} is facet-defining for~$\poly(G,p)$.
\end{theorem}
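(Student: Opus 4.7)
My plan is to lift the corresponding result for $\polyR(G)$ via Lemma~\ref{lem:lifting}, exactly as done in the proofs of Theorems~\ref{thm:trivial-facets} and~\ref{thm:stab-facets}. The heavy combinatorial content is already provided by Theorem~\ref{thm:webs-antiwes}, which asserts that when $G[U]$ is a $\chi$-critical web or antiweb the inequality \eqref{ineq:internal} is facet-defining for $\polyR(G)$. Consequently the face $F \subseteq \polyR(G)$ induced by \eqref{ineq:internal} has dimension $\dim(F) = \dim(\polyR(G))-1 = n+\bar e-|S|-1$.

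I would then apply Lemma~\ref{lem:lifting} to this inequality (after rewriting it in $\le$-form if one wishes to match the statement of the lemma literally, which amounts to a cosmetic sign change). This yields that the same inequality, augmented by a zero coefficient on the variable $y$, is valid for $\poly(G,p)$, and that the face $F'$ it induces in $\poly(G,p)$ satisfies $\dim(F') \ge \dim(F)+1 = n+\bar e-|S|$. Since Theorem~\ref{thm:dim} shows that $\poly(G,p)$ is full-dimensional with $\dim(\poly(G,p))=n+\bar e-|S|+1$, any proper face has dimension at most $n+\bar e-|S|$. The face $F'$ is proper: because $F$ is itself proper in $\polyR(G)$, there exists $\bar x \in \polyR(G)\setminus F$; taking $y^*=\sum_{v\in V} p(v)$, the point $(y^*,\bar x)$ lies in $\poly(G,p)\setminus F'$. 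Combining both bounds yields $\dim(F')=n+\bar e-|S|$, so $F'$ is a facet of $\poly(G,p)$.

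I do not expect any substantive obstacle: the combinatorial work is entirely delegated to Theorem~\ref{thm:webs-antiwes}, and the remaining steps are a direct application of the lifting machinery developed earlier in Section~\ref{sec:rep-poly} together with the full-dimensionality of $\poly(G,p)$. The only point requiring the slightest care is confirming that $F'$ is a proper face, which follows immediately from $F \subsetneq \polyR(G)$.
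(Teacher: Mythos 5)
Your proof is correct and follows essentially the same route as the paper: lift the facet-defining inequality from Theorem~\ref{thm:webs-antiwes} via Lemma~\ref{lem:lifting} and conclude using the full-dimensionality of $\poly(G,p)$ from Theorem~\ref{thm:dim}. Your explicit verification that $F'$ is a proper face is a detail the paper leaves implicit, and you handle it correctly.
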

\begin{proof}
    The validity of~\eqref{ineq:internal} for $\poly(G,p)$ is shown in Proposition~\ref{prop:valid:internal}.
    By Theorems~\ref{thm:webs-antiwes} and Lemma~\ref{lem:lifting},  inequality~\eqref{ineq:internal} is facet-defining for~$\poly(G,p)$ since this polytope is full-dimensional (Theorem~\ref{thm:dim}).
\end{proof}

To the best of our knowledge, the existence of polynomial-time  separation algorithms for inequalities~\eqref{ineq:internal} is an open question.

\section{Computational experiments}\label{sec:experiments} 

In this section, we report on our tests with the formulations for \pmc\, and compare their performance with the other solving methods in the literature proposed by~\cite{BIANCHESSI2021105464} and by~\cite{kowalczyk2017exact}.

Our first two solution approaches are based on the formulations (AF) and (RF), which are solved by the standard branch-and-cut algorithm of a MILP solver (with the automatic cuts implemented in the solver). 
The resulting methods are denoted by $\textsc{bc-a}$ and $\textsc{bc-r}$, respectively. 
The vertex ordering used in the representatives formulation is obtained by computing an inclusion-wise maximal clique~$K$ in the input graph using the heuristic due to~\cite{Grosso08}, and then ordering the vertices in non-decreasing distances from $K$.
The motivation to use such an ordering is to reduce the number of variables in the model.

All the proposed algorithms are implemented in C++ using LEMON Graph Lib.\ 1.3.1, and Gurobi 10 as the MILP solver.
All Gurobi parameters are set as default, except for \texttt{MIPFocus}, which is set to $1$.
The experiments are carried out on a computer running Linux Ubuntu 22.04 LTS (64-bit) equipped with Intel Core i7-4790 at 3.6 GHz and 8GB of RAM\@.
The time limit for each instance is 840 seconds.
We also run the binary search algorithm devised by~\cite{kowalczyk2017exact} (denoted by $\textsc{kl}$) in the same computational environment, with the same time limit of 840 seconds.

The computational results of the branch-and-price algorithm of~\cite{BIANCHESSI2021105464} (denoted by \textsc{bt}) are obtained from the table mentioned in their work, and which is publicly available at \url{https://github.com/Treema81/PMC_Data}.
The time limit of 840 seconds is chosen for $\textsc{kl}$ and the branch-and-cut procedures so as to provide a fair comparison with the results for a limit of 720 seconds used in~\cite{BIANCHESSI2021105464} for \textsc{bt}, considering the difference between the CPUs as reported in~\cite{PassMark}.

\medskip
\subsection{Dataset}\label{sec:dataset}

We run tests on two different datasets.  
The first dataset contains the instances used in both~\cite{kowalczyk2017exact} and~\cite{BIANCHESSI2021105464}, and will be the basis for all the experiments in Sections~\ref{sec:comparison:compact} and~\ref{sec:comparison:state}.
These instances are constructed  with the Erd\H{o}s–Rényi random graph model, which is described next.  
Given $n\in \Z_>$ and $d \in [0,1]$, $\mathcal{G}(n,d)$ is a random variable whose values are $n$-vertex graphs.
The probability distribution of $\mathcal{G}(n,d)$ is defined as follows: for any graph $G$ on $n$ vertices with $\ell$ edges, the probability that $\mathcal{G}(n,d) = G$ is $d^{\ell}(1-d)^{\binom{n}{2}-\ell}$.
The processing times of the jobs are integer values chosen uniformly at random from the intervals $a=[1,10]$, $b = [1,50]$, or $c=[1,100]$.
The dataset used in the literature~\citep{kowalczyk2017exact,BIANCHESSI2021105464} contains 2550 instances, and is constructed as follows: it contains~10 instances for each pair $(n,d) \in \{10,25,50,75,100\} \times \{0.1,0.2,0.3,0.4,0.5\}$, processing-time intervals in $\{a,b,c\}$, and $m \in \{5,10,15,20\}$ if $n\ge 25$,  and $m=5$ otherwise.

We observe that most of these instances have their optimal value very close to (at most 5\% larger than) the trivial lower bound~$\lceil\sum_{v \in V} p(v)/m\rceil$ for \pmc. This means that the conflicts in these instances do not affect the optimal value much compared to the optimal value when scheduling the same jobs without conflicts.
For this reason, we consider in Section~\ref{subsec:resultsstructeredinstances} a second set of instances where the optimal value is much larger than the trivial bound.

\subsection{Comparison of the compact models}\label{sec:comparison:compact}
\textcolor{black}{For the computational experiments reported in this subsection, we set the Gurobi parameter \texttt{Threads} equal to six.}
Out of the 2550 benchmark instances used in the literature, \textsc{bc-r} solves 100 more instances than \textsc{bc-a}, and produces gaps that are (on average) 13\% smaller than those obtained by~\textsc{bc-a}.
The average runtimes per instance of \textsc{bc-a} and \textsc{bc-r} are 205.08 and 178.32 seconds, respectively.
Both algorithms solve all instances with five machines; therefore, the poor performance of \textsc{bc-a} compared to \mbox{\textsc{bc-r}} is due to instances with at least 10 machines.
This may be explained by the symmetries in the assignment formulation, which increase with the number of machines.
Preliminary experiments indicate that the symmetry-breaking inequalities in Section 2 have minimal effect on the performance of~\textsc{bc-a}.

Since testing the feasibility of \pmc\ instances is already $\NP$-complete, we run the primal heuristics proposed by~\cite{kowalczyk2017exact} before starting our branch-and-cut procedure.
These heuristics include feasibility tests of the \pmc\ instances using algorithms for vertex coloring (e.g. DSATUR), the approximation algorithm for \pmc\ devised by~\cite{BODLAENDER1994219}, and a local search heuristic that swaps groups of jobs between two machines.
If any solution is found within five seconds, we use it as a ``warm start'' in the proposed solving methods.
We have also implemented separation algorithms for the clique inequalities~\eqref{ineq:clique} and odd cycle inequalities~\eqref{ineq:odd-cycle}.
The clique inequalities are separated exactly using integer linear programming, and the odd-cycle inequalities are separated in polynomial time as described in Section~\ref{sec:stable-set}.
After preliminary experiments, we notice that the odd-cycle inequalities are not very effective for reducing the gap in most instances of our dataset. 
In general, the best results are obtained when we only separate clique inequalities at the root node of the enumeration tree.
As mentioned in Section~\ref{sec:dataset}, many of these instances have their optimal value close to the trivial lower bound for \pmc, which also equals the dual bounds obtained by the proposed algorithms already at the root node.
Therefore, we only separate them on instances with density at least $0.3$.

Let us denote by \textsc{bc-a+} and \textsc{bc-r+} the algorithms that include the primal heuristics and the clique cuts as described above.
The inclusion of the heuristics and cuts has more impact on the representatives formulation:  \textsc{bc-a+} solves five more instances than \textsc{bc-a}, while \textsc{bc-r+} solves 68 more instances than  \mbox{\textsc{bc-r}}.
Overall, \textsc{bc-r+} solves 163 more benchmark instances than \textsc{bc-a+}, and the average runtimes of \textsc{bc-r+} and \textsc{bc-a+} are 152.06 and 204.98 seconds, respectively.
These results show that  \textsc{bc-r+} clearly dominates the other tested methods based on the compact models.
Next, we compare the performance of this algorithm with the state-of-the-art solving methods for \pmc.

\subsection{Comparison with the state of the art}\label{sec:comparison:state}

\textcolor{black}{Since \textsc{kl} uses a single thread, we set the Gurobi parameter \texttt{Threads} equal to one for~\textsc{bc-r+} and  \textsc{bc-a+} in the experiments comparing their performance with \textsc{kl} and~\textsc{bt} in Sections~\ref{sec:comparison:state} and~\ref{subsec:resultsstructeredinstances}. 
We observe that \textsc{bt} uses six threads, as stated in~\cite{BIANCHESSI2021105464}. }
Table~\ref{table:overall:new} shows the number of instances for which each algorithm reaches the time limit  without finding an optimal solution or proving infeasibility (column \emph{unsolved}).
For the unsolved instances where the algorithm computes a feasible solution (\emph{subopt}),  the average gaps obtained are presented in the column labeled \emph{Gap}.  We report the results in this way because if we take the average including all instances with gap~$0$ then the three methods are indistinguishable.
The gaps are reported for each value $d \in \{0.1,0.2,0.3,0.4,0.5\}$ separately.   
The gap for a method on an instance is the difference (expressed as a percentage) between the best primal and dual bounds found by this particular method within the runtime limit.
We remark that the best dual bounds produced by the three methods are essentially the same for all benchmark instances.

The average gaps obtained by \textsc{bc-r}+, \textsc{kl}, and  \textsc{bt} are very similar when we consider the entire benchmark of 2550 instances:  close to  $0\%$  if $d \in \{0.1,0.2\}$, around $0.5\%$ if $d=0.3$, and approximately $1\%$ if $d \in \{0.4,0.5\}$.
These small average gaps are due to the fact that more than $80\%$ of the instances in the benchmark are solved to optimality by any of these three algorithms.
We observe that, for every benchmark instance, both \textsc{bc-r+} and \textsc{kl} either show infeasibility or give a primal bound.
In contrast, \textsc{bt} neither proves infeasibility nor obtains a primal solution for 84 instances (30 with $d=0.1$, 23 with $d=0.3$, 10 with $d=0.4$, and 21 with $d=0.5$).

As indicated in Table~\ref{table:overall:new}, the average gaps obtained by \textsc{bc-r}+ are smaller than the ones obtained by \textsc{kl} (except when the density equals $0.2$) and \textsc{bt}.
Indeed, for the instances with density at least $0.3$, the average gaps obtained by the proposed algorithm are up to \textcolor{black}{8} times smaller than the ones produced by \textsc{kl} and up to \textcolor{black}{three} times smaller than the gaps given by \textsc{bt}. 
The box plots in Figure~\ref{fig:boxplot} show that most of the solutions produced by \textsc{bc-r+} on these instances have a gap of at most 5\%.
Moreover, it indicates that \textsc{bc-r+} produces better solutions compared to those given by \textsc{kl} and \textsc{bt}. 
On the other hand, the total number of instances not solved to optimality by \textsc{bc-r}+ is around three times larger than that obtained by \textsc{kl} and twice the number given by~\textsc{bt}. 

\begin{table}[tbh!]
\small
\setlength{\tabcolsep}{2pt}
\caption{For each of the algorithms \textsc{bc-r}+, \textsc{kl}, and \textsc{bt}, the table shows the average gaps (in \%) of the benchmark instances not solved to optimality by that method. For every graph density in $\{0.1,0.2,0.3, 0.4, 0.5\}$, there are 510 instances in the literature dataset. }
\rowcolors{1}{lightgray}{}
\centering
\begin{tabular}{lrrrrrr}
\hiderowcolors
\toprule
 & \multicolumn{2}{c}{\textsc{bc-r}+} & \multicolumn{2}{c}{\textsc{kl}} & \multicolumn{2}{c}{\textsc{bt}}\\ 
\cmidrule(r){2-3} \cmidrule(r){4-5} \cmidrule(r){6-7} 
$d$  & Unsolved & Gap \%   & Unsolved & Gap \%  & Unsolved  &  Gap \% (subopt)\\
\midrule
0.1   & \textcolor{black}{9} &  \textcolor{black}{0.55}  & 21 & 2.77  & 36  &  0.44  \phantom{0}(6)\\
0.2   & \textcolor{black}{57} &  \textcolor{black}{0.41}  & 0 & -  &  28  &  1.07 (28)\\
0.3   & \textcolor{black}{120} &  \textcolor{black}{1.66}  & 35 & 14.10  & 60  &  6.07 (37)\\
0.4   & \textcolor{black}{137} &  \textcolor{black}{3.30}  & 41 & 10.58  & 61  &  12.34 (51)\\
0.5  & \textcolor{black}{138} &  \textcolor{black}{5.23}  & 38 & 9.84  & 69 &  5.99 (48)  \\
\noalign{\global\rownum=1}
\showrowcolors
\bottomrule
\end{tabular}
\label{table:overall:new}
\end{table}

\begin{figure*}[tbh!]
    \centering
\begin{filecontents}{data-bcr3ST.csv}
	0.403226, 0.390625, 0.469484, 0.395257, 0.434783, 0.371747, 0.719424, 2.38095, 0.276243, 0.475059, 0.511509, 0.289855, 0.753769, 0.531915, 0.753769, 0.954654, 0.527704, 0.831025, 0.485437, 0.512821, 0.961538, 0.485437, 0.520833, 0.502513, 0.943396, 0.485437, 1.0582, 1.85185, 1.75439, 3.38983, 6.77966, 3.50877, 1.69492, 1.75439, 7.01754, 5.55556, 3.33333, 2.65018, 4.20168, 5.62613, 4.59082, 7.61062, 2.10526, 3.46939, 5.05837, 15.1667, 4.73588, 1.96078, 2.99625, 3.22581, 2.7451, 5.88235, 2.41935, 4.05904, 9.25267, 10.3321, 5.53633, 0.591716, 0.413223, 0.355872, 0.383142, 0.434783, 0.377358, 1.19048, 0.377358, 0.716846, 0.395257, 0.416667, 0.769231, 0.78125, 0.75188, 0.70922, 0.793651, 0.271003, 0.977199, 0.287356, 0.931677, 0.854701, 0.8, 1.25, 0.911854, 0.877193, 0.285714, 0.574713, 0.552486, 0.598802, 0.581395, 0.613497, 0.571429, 0.584795, 0.613497, 0.546448, 0.473934, 0.510204, 0.578035, 0.502513, 0.531915, 0.502513, 0.478469, 0.555556, 1.02041, 0.961538, 1, 0.943396, 0.719424, 0.869565, 0.763359, 0.829876, 0.760456, 0.711744, 0.420168, 0.408163, 0.390625, 0.380228, 0.793651, 0.735294, 0.8, 0.775194, 0.819672, 0.763359 
\end{filecontents}
\begin{filecontents*}{data-bcr3.csv}
0.403226,0.395257,0.371747,0.719424,0.550964,0.7109,0.511509,0.578035,1.00251,0.531915,0.252525,0.717703,1.3089,0.555556,0.485437,0.512821,0.483092,0.485437,0.529101,0.502513,0.943396,0.485437,0.531915,0.413223,0.383142,0.434783,0.377358,0.796813,0.377358,0.359712,0.395257,0.416667,0.769231,0.78125,0.75188,0.70922,0.473934,0.510204,0.578035,0.531915,0.502513,0.478469,0.555556,1.02041,0.961538,0.943396,1.85185,1.75439,1.72414,6.77966,1.78571,1.69492,1.75439,5.35714,1.92308,1.69492,1.9573,7.87879,2.80374,3.62903,1.8797,2.10526,2.87474,4.12574,3.23194,4.0367,2.72374,1.89394,1.81818,6.06061,3.0303,2.41935,6.47482,6.93431,6.17761,4.54545,0.540541,0.653595,0.573066,0.931677,1.13636,0.8,0.628931,0.609756,0.877193,0.569801,0.595238,0.574713,1.0989,0.598802,0.581395,0.571429,0.584795,0.613497,0.546448,0.719424,0.436681,0.763359,0.829876,0.381679,0.711744,0.420168,0.408163,0.390625,0.380228,0.793651,0.735294,0.8,0.775194,0.819672,0.763359,0.775194
\end{filecontents*}
\begin{filecontents*}{data-kl3.csv}
8.31,15.53,6.29,27.20,18.28,14.26,16.25,14.66,14.29,10.88,7.04,5.71,4.59,9.13,4.52,18.18,20.83,9.38,22.22,24.32,9.23,6.56,12.90,13.33,13.24,14.33,15.58,10.56,16.44,8.87,14.44,24.78,32.63,28.67
\end{filecontents*}
\begin{filecontents*}{data-bt3.csv}
41.54,41.10,21.07,0.99,0.58,4.08,0.29,7.89,0.54,25.58,0.92,0.59,0.29,0.60,16.99,0.59,42.98,0.59,0.56,2.41,0.83,0.96,0.26,0.29,2.03,1.07,0.51,0.96,1.33,0.84,0.49,0.49,0.49,1.05,1.02,1.43,0.49
\end{filecontents*}
\begin{subfigure}[t]{0.2\textwidth}
\centering
\begin{tikzpicture}
	\pgfplotstableread[col sep=comma]{data-bcr3.csv}\csvdataA
\pgfplotstabletranspose\datatransposedA{\csvdataA} 

	\pgfplotstableread[col sep=comma]{data-kl3.csv}\csvdataB
\pgfplotstabletranspose\datatransposedB{\csvdataB} 

    \pgfplotstableread[col sep=comma]{data-bt3.csv}\csvdataC
\pgfplotstabletranspose\datatransposedC{\csvdataC}
    
	\pgfplotstableread[col sep=comma]{data-bcr3ST.csv}\csvdataD
\pgfplotstabletranspose\datatransposedD{\csvdataD}

	\begin{axis}[
        y=0.1cm,
        x=0.5cm,
        xminorgrids=true,
		boxplot/draw direction = y,
		x axis line style = {opacity=0},
		axis x line* = bottom,
		axis y line* = left,
enlarge y limits={abs=0.4cm},
		ymajorgrids,
		xtick = {1, 2, 3},
		xticklabel style = {align=center, font=\footnotesize, rotate=45},
		xticklabels = {\textsc{bc-r+}, \textsc{kl}, \textsc{bt}},
		xtick style = {draw=none}, ylabel = {Gap \%},
        yticklabel style = {scale=0.7},
        y label style={scale=0.7},
        boxplot/whisker range=40, ]
\addplot+[boxplot, fill, draw=black] table[y index=1] {\datatransposedD};
        \addplot+[boxplot, fill, draw=black] table[y index=1] {\datatransposedB};
        \addplot+[boxplot, fill, draw=black] table[y index=1] {\datatransposedC};
	\end{axis}
\end{tikzpicture}
\caption{$d=0.3$}
\end{subfigure}  \qquad
\begin{filecontents*}{data-bcr4ST.csv}
1.16279, 0.403226, 0.816327, 0.716846, 1.19048, 0.813008, 0.75188, 0.760456, 1.23457, 0.675676, 0.78125, 0.775194, 1.5625, 0.775194, 0.854701, 0.75188, 0.793651, 0.78125, 0.584795, 0.60241, 0.613497, 0.537634, 0.598802, 0.621118, 0.507614, 11.1111, 11.1111, 10.6383, 4.16667, 6.81818, 8, 11.1111, 9.09091, 7.14286, 9.7561, 14.4681, 7.20339, 11.9639, 10.0233, 6.16114, 15.7895, 15.8482, 13.9073, 11.0345, 8.55107, 14.8594, 11.8421, 16.3866, 10.7981, 4.82456, 9.95671, 8.18182, 3.64583, 12.6761, 11.10701, 1.01695, 1.51515, 1.15385, 1.1236, 1.60643, 1.17647, 1.14068, 0.769231, 0.772201, 0.699301, 0.740741, 1.48148, 1.55039, 0.684932, 0.714286, 1.45985, 0.806452, 1.5873, 1.63934, 1.47059, 0.904977, 1.01523, 1.02564, 1, 0.540541, 1.04712, 1.01523, 0.512821, 1.02564, 1.85185, 0.990099, 1.04167, 0.909091, 0.952381, 0.980392, 1.06383, 1.0989, 2.63158, 2.5, 2.5, 3.53982, 1.06952, 2.68657, 2.69542, 2.14067, 1.87166, 1.46628, 2.33918, 2.01729, 2.65252, 3.61446, 1.66667, 1.62162, 1.6129, 2.08333, 2.23464, 2.40964, 2.22222, 2.1164, 3.58974, 1.6, 1.77305, 0.809717, 1.09489, 1.63934, 1.78571, 1.5625, 1.56863, 1.16279, 1.78571, 1.63934, 1.48148, 1.44928, 1.43885, 1.3986, 0.75188, 0.813008, 1.49254, 1.41844, 1.3986
\end{filecontents*}
\begin{filecontents*}{data-bcr4.csv}
1.16279, 0.403226, 0.816327, 0.716846, 1.19048, 0.813008, 0.75188, 0.760456, 0.826446, 1.0101, 0.78125, 0.775194, 0.787402, 0.775194, 0.78125, 0.854701, 0.75188, 0.793651, 0.78125, 0.584795, 0.613497, 0.598802, 0.609756, 0.571429, 0.621118, 1.0101, 11.1111, 9.09091, 6.66667, 2.12766, 4.65116, 8, 11.1111, 9.09091, 7.14286, 2.63158, 14.4681, 7.20339, 11.9639, 6.08273, 3.64964, 18.7638, 7.14286, 10.3448, 12.8378, 7.67386, 14.8594, 11.8421, 13.8528, 5, 5.24017, 9.95671, 6.91244, 3.14136, 8.37438, 6.73575, 1.10701, 1.01695, 1.14068, 1.15385, 1.49254, 1.20968, 1.17647, 1.51515, 0.769231, 0.387597, 0.699301, 0.740741, 1.48148, 0.78125, 1.36054, 0.714286, 1.45985, 0.806452, 1.5873, 1.63934, 0.985222, 0.904977, 1.01523, 1.02564, 1, 0.540541, 1.04712, 0.510204, 1.02041, 0.515464, 0.934579, 0.990099, 1.04167, 0.952381, 0.980392, 1.06383, 1.0989, 2.63158, 2.5, 2.5, 2.96736, 1.59574, 2.1021, 1.63488, 1.84049, 2.39362, 1.75439, 1.47493, 2.01729, 3.42105, 3.0303, 2.20994, 1.62162, 1.6129, 2.08333, 2.23464, 2.40964, 2.22222, 2.1164, 1.05263, 3.33333, 1.6, 1.42349, 1.20968, 0.732601, 1.23457, 1.07914, 0.787402, 1.1811, 1.92308, 1.43369, 1.63934, 1.48148, 1.44928, 1.43885, 1.3986, 1.49254, 0.813008, 1.49254, 0.714286, 1.3986
\end{filecontents*}
\begin{filecontents*}{data-kl4.csv}
6.55, 9.17, 15.22, 12.05, 15.60, 10.19, 13.67, 10.02, 9.04, 9.64, 12.90, 14.08, 5.97, 24.53, 10.60, 10.73, 8.73, 1.23, 20.10, 11.68, 10.10, 9.77, 6.76, 12.77, 2.27, 18.97, 17.65, 9.62, 4.65, 2.38, 23.10, 17.60, 6.10, 3.00, 3.10, 16.40, 11.00, 4.10, 9.20, 6.20, 1.17
\end{filecontents*}
\begin{filecontents*}{data-bt4.csv}
2.73, 2.73, 5.16, 2.66, 2.66, 4.59, 8.11, 0.73, 9.21, 5.54, 2.19, 7.90, 0.73, 3.57, 3.30, 0.40, 2.36, 5.72, 0.73, 8.76, 2.82, 4.40, 7.66, 1.60, 5.72, 3.09, 2.28, 1.63, 6.39, 28.60, 6.55, 7.34, 35.03, 17.53, 13.54, 17.91, 0.38, 41.71, 0.39, 8.34, 25.28, 27.44, 31.01, 30.91, 18.41, 48.74, 20.20, 14.36, 20.55, 48.93, 51.11
\end{filecontents*}
\begin{subfigure}[t]{0.2\textwidth}
\centering
\begin{tikzpicture}
	\pgfplotstableread[col sep=comma]{data-bcr4.csv}\csvdataA
\pgfplotstabletranspose\datatransposedA{\csvdataA} 

	\pgfplotstableread[col sep=comma]{data-kl4.csv}\csvdataB
\pgfplotstabletranspose\datatransposedB{\csvdataB} 

    \pgfplotstableread[col sep=comma]{data-bt4.csv}\csvdataC
\pgfplotstabletranspose\datatransposedC{\csvdataC}
    
	\pgfplotstableread[col sep=comma]{data-bcr4ST.csv}\csvdataD
\pgfplotstabletranspose\datatransposedD{\csvdataD}

	\begin{axis}[
        y=0.1cm,
        x=0.5cm,
        xminorgrids=true,
		boxplot/draw direction = y,
		x axis line style = {opacity=0},
		axis x line* = bottom,
		axis y line* = left,
enlarge y limits={abs=0.4cm},
		ymajorgrids,
		xtick = {1, 2, 3},
		xticklabel style = {align=center, font=\footnotesize, rotate=45},
		xticklabels = {\textsc{bc-r+}, \textsc{kl}, \textsc{bt}},
		xtick style = {draw=none}, ylabel = {Gap \%},
        yticklabel style = {scale=0.7},
        y label style={scale=0.7},
        boxplot/whisker range=20, ]
\addplot+[boxplot, fill, draw=black] table[y index=1] {\datatransposedD};
        \addplot+[boxplot, fill, draw=black] table[y index=1] {\datatransposedB};
        \addplot+[boxplot, fill, draw=black] table[y index=1] {\datatransposedC};
	\end{axis}
\end{tikzpicture}
\caption{$d=0.4$}
\end{subfigure}  \qquad
\begin{filecontents*}{data-bcr5ST.csv}
3.57143, 3.57143, 3.44828, 3.33333, 6.45161, 3.33333, 2.72374, 7.63889, 3.23741, 2.69231, 5.57621, 2.65152, 7.08333, 4.5082, 4.95868, 6.34921, 2.7972, 4.1958, 5.34351, 6.0241, 1.3986, 3.52113, 0.595238, 1.65746, 0.552486, 0.584795, 1.15607, 0.666667, 1.06383, 1.28205, 1.19048, 0.952381, 2.35294, 3.57143, 3.33333, 3.33333, 3.57143, 3.7037, 3.84615, 3.57143, 3.22581, 3.15789, 2.61194, 3.15789, 2.42915, 3.53357, 3.15315, 2.73437, 2.88809, 3.62903, 2.25564, 2.98507, 3.14961, 2.52101, 2.73973, 2.12766, 2.91971, 4.03226, 2.25564, 2.8777, 2.12766, 14.2857, 15.2174, 10.8696, 12.5, 14.2857, 16.2791, 14.2857, 12.5, 5, 17.0213, 19.1244, 13.5204, 15.2975, 7.34908, 17.0213, 13.5714, 28.8416, 12.963, 13.0102, 15.0327, 18.75, 21.8009, 17.3913, 9.67742, 6.95187, 14.6919, 17.9245, 11.3744, 10.3286, 8.74317, 1.42857, 1.0101, 1.42857, 1.63043, 0.966184, 1.82927, 1.0582, 1.94175, 1.0989, 1.51515, 1.0101, 2.12766, 1.13636, 0.925926, 1.90476, 0.990099, 1.11111, 1.0101, 1.94175, 1.90476, 3.57143, 3.125, 3.57143, 3.57143, 3.57143, 3.7037, 3.33333, 2.95203, 2.68199, 2.17391, 1.48699, 3.30882, 2.50896, 3.00429, 2.37154, 3.0303, 2.65781, 2.14286, 3.125, 2.29008, 1.5625, 2.23881, 4.25532, 3.67647, 1.3986, 2.04082, 3.10078
\end{filecontents*}
\begin{filecontents*}{data-bcr5.csv}
3.44828,3.33333,2.72374,2.69231,1.9084,4.95868,2.7972,2.83688,1.3986,1.11111,0.552486,0.641026,1.5873,3.33333,3.7037,3.22581,1.42857,1.8797,1.77936,2.03252,0.727273,3.15315,1.9685,1.82482,2.44898,2.25564,1.51515,1.6,2.52101,2.73973,1.42857,2.20588,0.833333,2.25564,1.45985,2.8169,1.42857,1.0101,0.956938,1.0929,0.966184,1.22699,1.0582,0.980392,0.552486,1.51515,1.0101,1.07527,1.13636,0.925926,1.90476,1.11111,1.0101,0.980392,0.961538,16.2791,9.30233,12.766,7.89474,5.26316,12.1951,16.2791,10.2564,7.31707,11.3636,19.1244,16.0891,16.2465,6.36605,9.76864,13.1579,13.2565,9.36639,6.06061,9.30233,17.2727,14.9485,13.198,7.18232,9.375,11.7647,14.2857,12.2066,5.91133,11.6402,3.57143,3.57143,3.57143,3.7037,3.33333,3.30882,2.30769,3.01724,1.48699,2.23048,1.80505,3.00429,2.37154,2.29008,1.3468,2.14286,3.87597,2.29008,1.5625,2.23881,2.8777,2.23881,2.08333,1.36986,3.10078
\end{filecontents*}
\begin{filecontents*}{data-kl5.csv}
0.36,14.37,7.50,2.44,2.33,21.74,33.93,30.19,2.63,18.18,7.14,25.93,8.81,6.34,7.41,8.05,3.03,5.70,12.97,12.70,6.30,5.78,5.67,11.70,2.82,2.31,4.37,3.72,5.91,6.93,5.42,2.33,8.65,13.85,13.64,16.51,13.69,12.50
\end{filecontents*}
\begin{filecontents*}{data-bt5.csv}
11.14,3.74,23.72,30.80,1.91,2.76,1.34,33.15,3.40,22.80,1.91,1.84,3.10,1.22,1.96,2.05,2.20,0.81,22.81,0.79,26.20,0.80,0.77,21.67,2.23,0.77,1.43,27.76,1.40,0.81,0.73,1.92,1.09,3.32,1.47,2.80,2.41,1.49,1.68,0.39,1.54,0.87,2.12,1.51,4.21,1.55,0.75,0.73
\end{filecontents*}
\begin{subfigure}[t]{0.2\textwidth}
\centering
\begin{tikzpicture}
	\pgfplotstableread[col sep=comma]{data-bcr5.csv}\csvdataA
\pgfplotstabletranspose\datatransposedA{\csvdataA} 

	\pgfplotstableread[col sep=comma]{data-kl5.csv}\csvdataB
\pgfplotstabletranspose\datatransposedB{\csvdataB} 

    \pgfplotstableread[col sep=comma]{data-bt5.csv}\csvdataC
\pgfplotstabletranspose\datatransposedC{\csvdataC}

	\pgfplotstableread[col sep=comma]{data-bcr5ST.csv}\csvdataD
\pgfplotstabletranspose\datatransposedD{\csvdataD}
    
	\begin{axis}[
        y=0.1cm,
        x=0.5cm,
        xminorgrids=true,
		boxplot/draw direction = y,
		x axis line style = {opacity=0},
		axis x line* = bottom,
		axis y line* = left,
enlarge y limits={abs=0.4cm},
		ymajorgrids,
		xtick = {1, 2, 3},
		xticklabel style = {align=center, font=\footnotesize, rotate=45},
		xticklabels = {\textsc{bc-r+}, \textsc{kl}, \textsc{bt}$\bigstar$},
		xtick style = {draw=none}, ylabel = {Gap \%},
        yticklabel style = {scale=0.7},
        y label style={scale=0.7},
        boxplot/whisker range=20, ]
\addplot+[boxplot, fill, draw=black] table[y index=1] {\datatransposedD};
        \addplot+[boxplot, fill, draw=black] table[y index=1] {\datatransposedB};
        \addplot+[boxplot, fill, draw=black] table[y index=1] {\datatransposedC};
	\end{axis}
\end{tikzpicture}
\caption{$d=0.5$}
\end{subfigure} \caption{Gaps produced by \textsc{bc-r+}, \textsc{kl}, and \textsc{bt} on dense instances ($d\ge 0.3$) where the corresponding algorithm computes a primal solution. 
    $\bigstar$ Recall that \textsc{bt} is the only method that does not find a feasible solution to all these instances (precisely, 23 instances with $d=0.3$, 10 with $d=0.4$, and 21 with $d=0.5$.)}
    \label{fig:boxplot}
\end{figure*}

Both~\cite{kowalczyk2017exact} and~\cite{BIANCHESSI2021105464}  observe that the hardest instances for their algorithms are the ones where the number of machines $m$ is close to the chromatic number of the conflict graph. 
In what follows, we take a closer look at these instances, which are defined by the following random graphs and number of machines: 
\begin{enumerate*}[(i)] \item $\mathcal{G}(100,0.1)$ with $m=5$,
\item $\mathcal{G}(50,0.2)$ with $m=5$, 
\item $\mathcal{G}(100,0.3)$ with $m=10$, 
\item $\mathcal{G}(75,0.4)$ with $m=10$, and
\item $\mathcal{G}(100,0.5)$ with $m=15$.
\end{enumerate*}
The processing times of the jobs are integer values chosen uniformly at random from the intervals $a=[1,10]$, $b = [1,50]$, or $c=[1,100]$.
As empirically shown by Kowalczyk and Leus (see Table~5 in~\cite{kowalczyk2017exact}),  and by Bianchessi and Tresoldi (see Table~4 in \cite{BIANCHESSI2021105464}), for each $d \in \{0.1,0.2,0.3,0.4,0.5\}$, these instances are the ones where both methods for \pmc\ have the highest average gap or running time among all instances of density~$d$.
There are 150 such instances in total: 10 instances for each pair $(n,d) \in \{(100,0.1), (50,0.2), (100,0.3), (75,0.4), (100,0.5)\}$ and each of the three possible intervals of processing times $a,b,$ and $c$.
The conflict graphs in the instances are denoted by rand\_$n$\_$d$\_$i$\_$j$, where $i \in \{a,b,c\}$ and $j \in \{0,\ldots, 9\}$.

Since the sparse instances (with $d\le 0.2$) are all solved to optimality by \textsc{bc-r+}, we report the runtimes obtained by this algorithm,  \textsc{kl}, and \textsc{bt} (see Section~\ref{subsec:hardinstances_sparse}).
Most of the dense instances ($d\ge 0.3$) are not solved to optimality by any of the three methods, and so we compare the average gaps in Section~\ref{subsec:hardinstances_dense}.
All average gaps reported in the following are taken over all instances where the corresponding algorithm finds a feasible solution, including those solved to guaranteed optimality.

\subsubsection{Hard instances with sparse conflict graphs ($d\le 0.2$)} \label{subsec:hardinstances_sparse}

We first consider the largest instances (100 vertices) in the dataset with  $d =0.1$ and $m =5$.
The running times (in seconds) of \textsc{bc-r+}, \textsc{kl}, and \textsc{bt} on these instances are depicted in Figure~\ref{fig:100-5}.
Observe that \textsc{bc-r+} solves all these instances \textcolor{black}{but one} to guaranteed optimality, \textsc{kl} finds optimal solutions to nine (out of 30) instances, and \textsc{bt} does not produce any feasible solution at all within the time limit. 
Considering only the instances where \textsc{kl} computed a (guaranteed) optimal solution,  \textsc{bc-r+} is over 20 times faster than \textsc{kl}.


\begin{figure}[tbh!]
\centering
\centerline{\begin{tikzpicture}
\centering
\begin{axis}[
    y=0.6cm,
    x=0.37cm,
enlarge x limits={abs=0.3cm},
    legend style={at={(0.67,0.07)},anchor=west,legend columns=-1,nodes={scale=0.7}, fill=none, draw=none},
    ymode=log,
    xlabel=Instance,
    ylabel=Time (s),
y label style={at={(axis description cs:-0.1,.5)},anchor=north},
    symbolic x coords={a\_0,a\_1,a\_2, a\_3, a\_4, a\_5, a\_6, a\_7, a\_8, a\_9, 
    b\_0,b\_1, b\_2, b\_3, b\_4, b\_5, b\_6, b\_7, b\_8, b\_9,
    c\_0,c\_1, c\_2, c\_3, c\_4, c\_5, c\_6, c\_7, c\_8, c\_9},
xtick=data,
x label style={at={(axis description cs:.5,-0.08)},scale=0.7,anchor=north},
    y label style={at={(axis description cs:-0.08,.5)},scale=0.7,anchor=north},
    yticklabel style={scale=0.7},
xticklabel style={rotate=0,scale=0.7},
minor xtick={a\_9,b\_9},
    xminorgrids=true,
    ]
\addplot table [x=name, y=bcrST]{data100-5.dat};
\addlegendentry{\textsc{bc-r+}}
\addplot table [x=name, y=kl]{data100-5.dat};
\addlegendentry{\textsc{kl}}
\addplot table [x=name, y=bt]{data100-5.dat};
\addlegendentry{\textsc{bt}}
\end{axis}
\end{tikzpicture}}
\caption{Running times (with logarithmic scale) of  \textsc{bc-r}+,  and \textsc{kl} on random graphs with~$100$ vertices, $d=0.1$, and $m=5$.\label{fig:100-5}}
\end{figure}

%

Figure~\ref{fig:50} shows a comparison of the running times (in seconds) of \textsc{bc-r+}, \textsc{kl}, and \textsc{bt}
when solving the instances with 50 vertices, $d=0.2$, and $m=5$. 
We observe that \textsc{bt} does not find a (guaranteed) optimal solution within the time limit for nine (out of 30) instances, while \textsc{bc-r+} and \textsc{kl} produce an optimal solution for every instance with 50 vertices.
Moreover, \textsc{bc-r+}  is over three times faster than~\textsc{kl} (on average).


\begin{figure}[tbh!]
\centering
\centerline{\begin{tikzpicture}
\centering
\begin{axis}[
    y=0.35cm,
    x=0.37cm,
enlarge x limits={abs=0.3cm},
    legend style={at={(1,0.2)},anchor=east, nodes={scale=0.7}, draw=none, fill=none},
    ymode=log,
    xlabel=Instance,
    ylabel=Time (s),
symbolic x coords={a\_0,a\_1,a\_2, a\_3, a\_4, a\_5, a\_6, a\_7, a\_8, a\_9, 
    b\_0,b\_1, b\_2, b\_3, b\_4, b\_5, b\_6, b\_7, b\_8, b\_9,
    c\_0,c\_1, c\_2, c\_3, c\_4, c\_5, c\_6, c\_7, c\_8, c\_9},
xtick=data,
x label style={at={(axis description cs:.5,-0.08)},scale=0.7,anchor=north},
    y label style={at={(axis description cs:-0.08,.5)},scale=0.7,anchor=north},
    yticklabel style={scale=0.7},
xticklabel style={rotate=0,scale=0.7},
minor xtick={a\_9,b\_9},
xminorgrids=true
    ]
\addplot table [x=name, y=bcrST]{data50-new.dat};
\addlegendentry{\textsc{bc-r+}}
\addplot table [x=name, y=kl2]{data50-new.dat};
\addlegendentry{\textsc{kl}}
\addplot table [x=name, y=bt]{data50-new.dat};
\addlegendentry{\textsc{bt}}
\end{axis}
\end{tikzpicture}}
\caption{Running times (with logarithmic scale) of  \textsc{bc-r+}, \textsc{kl}, and \textsc{bt} on random graphs with~$50$ vertices, $d=0.2$, and $m=5$.\label{fig:50}}
\end{figure}

%

\subsubsection{Hard instances with dense conflict graphs ($d\ge 0.3$)} \label{subsec:hardinstances_dense}

Figure~\ref{fig:100-10} presents the average gaps (in \%) obtained by \textsc{bc-r}+, \textsc{kl}, and \textsc{bt} on the instances having processing times in the same interval ($a=[1,10]$, $b=[1,50]$, and $c=[1,100]$), 100 vertices, $d=0.3$, and $m=10$. 
For these instances, none of the algorithms finds a guaranteed optimal solution, and the gaps produced by \textsc{bc-r}+ are \textcolor{black}{around three} times smaller than the ones obtained by \textsc{kl} and \textsc{bt}.
The procedures \textsc{bc-r}+ and \textsc{kl} find feasible solutions to all tested instances, and \textsc{bt} does not find a solution to 23 (out of 30) instances. 
The clique inequalities reduce the gaps by around $20\%$ (on average).

Figure~\ref{fig:75} shows the average gaps (in \%) produced by the same algorithms on the instances with 75 vertices, $d=0.4$, and $m=10$.
We remark that all algorithms exceed the time limit on these instances (except for \textsc{kl}, which finds optimal solutions for two instances), and the dual bounds given by \textsc{kl} are slightly better (at most one unit larger) than the ones obtained by the other methods for all instances. 
Overall, the smallest gaps for these instances are obtained by \textsc{bc-r}+ and \textsc{kl}.
The former produces the best primal bounds on \textcolor{black}{12} (out of 30) instances, while the latter finds the best feasible solutions on \textcolor{black}{18} instances.
We note that \textsc{bt} does not find any solution for $10$ (out of $30$)  instances, while the other solving methods do give primal bounds for all instances.
Additionally, the gaps produced by \textsc{bt} are (on average) two times larger than the ones given by \mbox{\textsc{bc-r}+} and \textsc{kl}.
The clique inequalities are effective for these instances: \textsc{bc-r}+ produces an average gap that is $18\%$ smaller than the one obtained by~\textsc{bc-r}.

Finally, we consider the instances of 100 vertices with $d=0.5$ and $m=15$.
We observe that the clique cuts used in \textsc{bc-r}+ produce an average gap that is~$16\%$ smaller than that obtained by~\textsc{bc-r}.
As illustrated in Figure~\ref{fig:100-15-warm}, \textsc{bc-r}+ achieves smaller gaps for the instances with processing times in the interval~$a$, but \textsc{kl} still obtains tighter gaps in general for the instances with processing times in $b$ and~$c$.

\begin{figure*}[tbh!]
    \centering
\begin{subfigure}[t]{0.47\textwidth}
\centering
\begin{tikzpicture}
\begin{axis}[
bar width=.4cm,
    enlarge x limits={abs=1cm},
width=1.2\textwidth,
    height=1\textwidth,
ybar,
    legend style={at={(0.85,0.97)},anchor=north, legend columns=1, nodes={scale=0.7}, draw=none, fill=none},
nodes near coords,
    xlabel=Instance,
    ylabel=Gap \%,
symbolic x coords={$a$,$b$,$c$},
    xtick=data,
    x label style={at={(axis description cs:.5,-0.12)},scale=0.7,anchor=north},
    y label style={at={(axis description cs:-0.15,.5)},scale=0.7,anchor=north},
    yticklabel style={scale=0.7},
    xticklabel style={scale=0.7},
    every node near coord/.append style={font=\scriptsize},
    cycle list/Set1-5
]
\addplot+[fill] table [x=name, y=bcrST]{data100-10.dat};
\addlegendentry{\textsc{bc-r}+}
\addplot+[fill] table [x=name, y=kl2]{data100-10.dat};
\addlegendentry{\textsc{kl}}
\addplot+[fill] table [x=name, y=bt]{data100-10.dat};
\addlegendentry{\textsc{bt}}
\end{axis}
\end{tikzpicture}
\caption{Random graphs with~$100$ vertices, $d=0.3$, and $m=10$. \label{fig:100-10}}
\end{subfigure} \hfill
    \begin{subfigure}[t]{0.47\textwidth}

\centering
\begin{tikzpicture}
\centering
\begin{axis}[
bar width=.4cm,
    enlarge x limits={abs=1cm},
width=1.2\textwidth,
    height=1\textwidth,
ybar,
    legend style={at={(0.15,0.97)},anchor=north, legend columns=1, nodes={scale=0.7}, draw=none, fill=none},
nodes near coords,
    xlabel=Instance,
    ylabel=Gap \%,
symbolic x coords={$a$,$b$,$c$},
    xtick=data,
    x label style={at={(axis description cs:.5,-0.12)},scale=0.7,anchor=north},
    y label style={at={(axis description cs:-0.15,.5)},scale=0.7,anchor=north},
    yticklabel style={scale=0.7},
    xticklabel style={scale=0.7},
    every node near coord/.append style={font=\scriptsize},
    cycle list/Set1-5
    ]
\addplot+[fill] table [x=name, y=bcrST]{data75gap.dat};
\addlegendentry{\textsc{bc-r}+}
\addplot+[fill] table [x=name, y=kl2]{data75gap.dat};
\addlegendentry{\textsc{kl}}
\addplot+[fill] table [x=name, y=bt]{data75gap.dat};
\addlegendentry{\textsc{bt}}
\end{axis}
\end{tikzpicture}
\caption{Random graphs with~$75$ vertices, $d=0.4$, and $m=10$. \label{fig:75}}
\end{subfigure} 	\begin{subfigure}[t]{0.47\textwidth}
\centering
\begin{tikzpicture}
\centering
\begin{axis}[
bar width=.4cm,
    enlarge x limits={abs=1cm},
width=1.2\textwidth,
    height=1\textwidth,
ybar,
    legend style={at={(0.15,0.97)},anchor=north, legend columns=1,nodes={scale=0.7}, fill=none,draw=none},
nodes near coords,
    xlabel=Instance,
    ylabel=Gap \%,
symbolic x coords={$a$,$b$,$c$},
    xtick=data,
    x label style={at={(axis description cs:.5,-0.12)},scale=0.7,anchor=north},
    y label style={at={(axis description cs:-0.15,.5)},scale=0.7,anchor=north},
    yticklabel style={scale=0.7},
xticklabel style={scale=0.7},
    every node near coord/.append style={font=\scriptsize},
    cycle list/Set1-5
]
\addplot+[fill] table [x=name, y=bcrST]{data100-15.dat};
\addlegendentry{\textsc{bc-r}+}
\addplot+[fill] table [x=name, y=kl2]{data100-15.dat};
\addlegendentry{\textsc{kl}}
\addplot+[fill] table [x=name, y=bt]{data100-15.dat};
\addlegendentry{\textsc{bt}}
\end{axis}
\end{tikzpicture}
\caption{Random graphs with~$100$ vertices, $d=0.5$, and $m=15$. \label{fig:100-15-warm}}
\end{subfigure} 
\caption{Average gaps (in \%) for each processing time interval ($a=[1,10]$, $b=[1,50]$, and $c=[1,100]$) obtained by \textsc{bc-r}+, \textsc{kl}, and \textsc{bt} on dense conflict graphs.}
\end{figure*}

\subsection{Computational results on structured instances} \label{subsec:resultsstructeredinstances}

From the computational results in the previous section, we notice that most of the benchmark instances in the literature of \pmc\ with at least 50 vertices have their optimal value very close to (at most 5\% larger than) the trivial lower bound~$\lceil\sum_{v \in V} p(v)/m\rceil$ for \pmc. This means that the conflicts in these instances do not affect the optimal value much compared to the optimal value when scheduling the same jobs without conflicts.

To better understand the performance of the algorithms when the optimal value is  significantly larger than the trivial bound, we consider \pmc\ instances consisting of the following random graphs.
Given $n \in \Z_>$ an even number and $d \in [0,1]$, $\mathcal{B}(n,d)$ is a random bipartite graph with bipartition, say $\{A,B\}$, such that $|A|=|B|=n/2$, and in which every edge (linking a vertex in $A$ to one in~$B$) appears with probability~$d$.
As an example, consider an instance formed by a random bipartite graph $\mathcal{B}(100,1)$ (i.e., a complete bipartite graph where each element of the bipartition has 50 vertices), all processing times equal to one, and three machines. 
The trivial lower bound for this instance is $34$ while its optimal value is $50$.
On the other hand, if the conflict graph is $\mathcal{B}(100,0)$ (i.e., the empty graph with 100 vertices), then the lower bound equals the optimal value.

Inspired by the foregoing simple examples, we consider a set of $300$ instances defined as follows.
For each $d \in \{0.1,0.2,0.3,0.4,0.5\}$, $m \in \{3,5\}$, and $i \in \{10,50,100\}$, we have 10 instances consisting of $m$ machines and a random bipartite graph $\mathcal{B}(50,d)$ with the jobs in~$A$ and~$B$ having processing time equal to one and to $i$, respectively.
Note that the higher the probability, the larger the difference between the optimal value and the trivial lower bound.
This gap ranges from 5\% to 35\% for the instances described above.
We remark that \pmc\ restricted to bipartite conflict graphs and $m\geq 3$ cannot be approximated within $2-\varepsilon$ for all $\varepsilon >0$, unless $\Pclass=\NP$~\citep{BODLAENDER1994219}.

In the remainder of this section, we compare the computational performance of \textsc{bc-a}+, \textsc{bc-r}+, and \textsc{kl} on the instances described above.
Note that we do not report on the computational results obtained with \textsc{bt} as the implementation of~\cite{BIANCHESSI2021105464} is not available.
For the instances with $m=3$, \mbox{\textsc{bc-a}+} and \textsc{bc-r}+ solve all 150 instances within the time limit, while~\textsc{kl} does not solve 38 instances.
As shown in Table~\ref{table:bipartite:3}, \textsc{bc-r}+ and \textsc{bc-a}+ are both \textcolor{black}{very} fast on instances with $d\le 0.2$,  and \textsc{bc-r}+ is at least \textcolor{black}{15} times faster than \textsc{bc-a}+, and  \textcolor{black}{46} times faster than \textsc{kl} (on average) on instances with $d\ge 0.3$.

\begin{table}[tbh!]
\small
\setlength{\tabcolsep}{5pt}
\caption{For each of the algorithms \textsc{bc-a}+, \textsc{bc-r}+, and \textsc{kl} the table shows the average running times, and gaps (in \%) of the bipartite instances not solved to optimality by that method with $m=3$.
}
\rowcolors{1}{lightgray}{}
\centering
\begin{tabular}{lrrrr}
\hiderowcolors
\toprule
& \multicolumn{1}{c}{\textsc{bc-a}+} & \multicolumn{1}{c}{\textsc{bc-r}+} & \multicolumn{2}{c}{\textsc{kl}} \\ 
\cmidrule(r){2-2} \cmidrule(r){3-3} \cmidrule(r){4-5} 
$d$  & Time  & Time &  Time & Gap \%  \\
\midrule
0.1  & \textcolor{black}{0.51}  &  \textcolor{black}{8.53}  & 1.56 & -  \\
0.2   & \textcolor{black}{0.45}  &  \textcolor{black}{11.32}  & 599.76 & 2.19 (12) \\
0.3   & \textcolor{black}{91.91}  &  \textcolor{black}{14.56}  & 689.31 & 1.65 (21) \\
0.4   & \textcolor{black}{139.88}  &  \textcolor{black}{4.15}  & 265.84 & 1.28\phantom{0} (5)\\
0.5   & \textcolor{black}{99.88}  &  \textcolor{black}{3.33}  & 72.92 & -  \\
\noalign{\global\rownum=1}
\showrowcolors
\bottomrule
\end{tabular}
\label{table:bipartite:3}
\end{table} 
Considering the instances with $m=5$, \textsc{bc-r}+ does not solve \textcolor{black}{41} (out of 150) bipartite instances, while \textsc{bc-a}+ and \textsc{kl} do not solve \textcolor{black}{59} and 51 instances, respectively.
Moreover, the average gaps produced by \textsc{bc-r}+ are \textcolor{black}{around} 2 times smaller than the average gaps produced by \textsc{bc-a}+ and \textsc{kl}.
The average running times and gaps for each $d \in \{0.1, 0.2, 0.3, 0.4, 0.5\}$ are indicated in Table~\ref{table:bipartite:5}.

\begin{table}[tbh!]
\small
\setlength{\tabcolsep}{5pt}
\caption{For each of the algorithms \textsc{bc-a}+, \textsc{bc-r}+, and \textsc{kl} the table shows the average running times and gaps (in \%) of the bipartite instances not solved to optimality by that method with $m=5$.
}
\rowcolors{1}{lightgray}{}
\centering
\begin{tabular}{lrrrrrr}
\hiderowcolors
\toprule
& \multicolumn{2}{c}{\textsc{bc-a}+} & \multicolumn{2}{c}{\textsc{bc-r}+} & \multicolumn{2}{c}{\textsc{kl}}\\ 
\cmidrule(r){2-3} \cmidrule(r){4-5} \cmidrule(r){6-7}
$d$  & Time & Gap \% & Time & Gap \% &  Time & Gap \%  \\
\midrule
0.1  & 0.01        & -     &  \textcolor{black}{0.39}      & -         & 0.01 & -  \\
0.2  & 0.02        & -     &  \textcolor{black}{0.83}      & -         & 8.04 & -  \\
0.3  & \textcolor{black}{6.43} & -     &  \textcolor{black}{14.31}       & -         & 45.21 & 27.71 (1) \\
0.4  & 840.00  & \textcolor{black}{14.24} (30)&  \textcolor{black}{666.73}     & \textcolor{black}{4.3 (15)} & 746.31 & 7.24  (21)\\
0.5  & \textcolor{black}{839.20}  & \textcolor{black}{22.84 (29)}&  \textcolor{black}{781.62}    & \textcolor{black}{11.62 (26)} & 826.12 & 17.70 (29)  \\
\noalign{\global\rownum=1}
\showrowcolors
\bottomrule
\end{tabular}
\label{table:bipartite:5}
\end{table}

As shown in Sections~\ref{sec:comparison:compact} and~\ref{sec:comparison:state}, \textsc{bc-r+}  outperforms \textsc{bc-a+} (see Table~\ref{table:overall:new}), and produces smaller gaps than the ones  given by the state-of-the-art methods \textsc{kl} and \textsc{bt} on the benchmark instances (see Figure~\ref{fig:boxplot}).
We recall that for these instances, the gap between the optimal value  and the trivial bound is at most~5\%, and so the conflicts do not affect the the optimal value much compared to the optimal makespan when scheduling the same jobs without conflicts.
For instances where the conflicts really have an impact on the objective, that is,  for which the gap between the optimal value and the trivial bound is at least 5\%, our results indicate that \textsc{bc-r+} outperforms both \textsc{bc-a}+ and \textsc{kl}.

\section{Concluding remarks and further research} 

In this paper, we have  explored the close connection between parallel machine scheduling with conflicts (\textsc{pmc}) and the classical vertex coloring problem (\textsc{vcp}) through the lens of integer linear programming to obtain both theoretical and computational contributions to \textsc{pmc}.

First we have introduced a new compact formulation for \textsc{pmc} based on machine representatives that eliminates the symmetries appearing in the solution space of the intuitive assignment formulation.
For this model, we have proved a lifting lemma that allows us to transform every facet of the \textsc{vcp} polytope into a facet of the \textsc{pmc} polytope.
Using this lemma, we have proved that the \textsc{pmc} polytope is full-dimensional and shown that the inequalities in the model induce facets.
Furthermore, we prove that every facet-defining inequality of the stable set polytope associated with the anti-neighborhood of any vertex also defines a facet of the \textsc{pmc} polytope.
This implies that the external inequalities, a generalization of the Chvátal rank inequalities, are valid and induce facets under weak hypotheses.
To the best of our knowledge, this is the first polyhedral study of the parallel machine scheduling with conflicts.

On the computational side, we have implemented  branch-and-cut algorithms for \textsc{pmc} based on the representative model that apply the cuts given by subclasses of the external inequalities induced by cliques and odd cycles. 
The computational experiments on the hardest instances of the benchmark for \textsc{pmc} show that the proposed algorithms are in general superior (either in running time or quality of the solutions) to the current state-of-the-art solving methods for the problem.
Considering the entire set of instances in the benchmark dataset, the branch-and-cut approach produces average gaps that are up to 10 times smaller for the instances containing graphs of density at least $0.3$, but it has a larger average running time when compared mainly to the solving method due to~\cite{kowalczyk2017exact}.
We remark that the lower average running time of their method comes at the cost of a very intricate implementation that is based on a binary search, iteratively solving a subproblem that is essentially a bin packing problem with conflicts using a branch-and-price algorithm, and incorporating a diving search heuristic to enable an exploration of the solution space using the concept of limited discrepancy search.

We have noticed that most of the benchmark instances of \pmc\ (with at least 50 vertices) have their optimal value very close to the trivial lower bound, which is given by the sum of all processing times divided by the number of machines. 
We have introduced a new class of instances where this difference can become large, and we have conducted  
computational experiments that indicate that these instances become harder to solve as the difference between the optimal value and the trivial lower bound increases.
In this case, the proposed branch-and-cut algorithms clearly outperform the binary search devised in~\cite{kowalczyk2017exact}.

We conclude therefore that the proposed branch-and-cut approach to  \textsc{pmc} yields a simpler algorithm  that is competitive with the state-of-the-art solving methods for \pmc\, especially on instances with conflict graphs of higher density.
Moreover, it is easily amenable to enhancements using new cuts and separation algorithms inherited from the vertex coloring and stable set problems.

In addition to devising new strong inequalities and efficient separation procedures for the branch-and-cut approach, an interesting (but likely laborious) direction for further research is to use the set-covering formulation obtained from a Dantzig-Wolfe decomposition of the representatives model to develop a branch-and-price algorithm for the problem. 
Differently from~\cite{BIANCHESSI2021105464}, this set-covering model would have no symmetries due to permutations of stable sets, and so would not need the auxiliary variables present in Bianchessi and Tresoldi's model to reduce symmetry by ordering the machines by processing times.

\bibliographystyle{apacite}
\bibliography{biblio}

\end{document}